\newtheorem{lem}{Lemma}
\begin{document}
\title{A Non-stationary Service Curve Model for Performance Analysis of Transient Phases}
\author{\IEEEauthorblockN{Nico Becker and Markus Fidler}
\thanks{This work was supported by an ERC Starting Grant (UnIQue, StG 306644).}
\IEEEauthorblockA{Institute of Communications Technology, Leibniz Universit\"{a}t Hannover}}
\maketitle
\begin{abstract}
Steady-state solutions for a variety of relevant queueing systems are known today, e.g., from queueing theory, effective bandwidths, and network calculus. The behavior during transient phases, on the other hand, is understood to a much lesser extent as its analysis poses significant challenges. Considering the majority of short-lived flows, transient effects that have diverse causes, such as TCP slow start, sleep scheduling in wireless networks, or signalling in cellular networks, are, however, predominant. This paper contributes a general model of regenerative service processes to characterize the transient behavior of systems. The model leads to a notion of non-stationary service curves that can be conveniently integrated into the framework of the stochastic network calculus. We derive respective models of sleep scheduling and show the significant impact of transient phases on backlogs and delays. We also consider measurement methods that estimate the service of an unknown system from observations of selected probe traffic. We find that the prevailing rate scanning method does not recover the service during transient phases well. This limitation is fundamental as it is explained by the non-convexity of non-stationary service curves. A second key difficulty is proven to be due to the super-additivity of network service processes. We devise a novel two-phase probing technique that first determines a minimal pattern of probe traffic. This probe is used to obtain an accurate estimate of the unknown transient service.
\end{abstract}
%
%------------------------------------------------------------------------
%
\section{Introduction}
\label{sec:introduction}
The majority of flows in today's computer networks are short-lived~\cite{mellia:shortlivedtcp} and hence dominated by various transient effects that can have a significant impact on their performance. Relevant examples include TCP slow start~\cite{mellia:shortlivedtcp}, where the size of the initial congestion window has been repeatedly under debate~\cite{dukkipati:tcpinitialwindow}, the convergence of routing protocols, power saving in wireless networks using polling, or signalling and discontinuous reception in cellular networks~\cite{becker:lte}.

The performance analysis of transient phases causes, however, fundamental difficulties and queueing theory offers solutions mainly for the steady-state. As an example consider the basic M$\mid$M$\mid$1 queue, where the stationary state distribution follows readily from a set of linear balance equations, e.g.,~\cite{ross:probabilitymodels}. The transient behavior, on the other hand, is expressed by a set of differential equations for which mainly approximate or numerical solutions are known~\cite{zhang:transientmm1}. As a consequence, analytical works on transient effects in computer networks are sparse~\cite{wang:transientatm,souza1998algorithm,horvath2012transient} and often tailored to specific problems like TCP congestion control~\cite{mellia:shortlivedtcp}.

A theory that does without an assumption of stationarity is the deterministic network calculus~\cite{leboudec:networkcalculus, chang:performanceguarantees}. It employs envelope functions of possibly non-stationary processes to analyze the worst-case behavior of systems. Hence, it takes transient phases into the consideration. Once the worst-case is achieved, it can, however, not predict how a system progresses. We explain this effect and present an example in Sec.~\ref{sec:systemmodelregenerative}. The stochastic network calculus\cite{chang:performanceguarantees, burchard:endtoendstatisticalcalculus, li:effectivebandwidthcalculus2, ciucu:networkservicecurvescaling2, fidler:momentcalculus, jiang:stochasticnetworkcalculus, fidler:netcalcsurvey, ciucu:goodvalue, fidler:netcalcguide}, on the other hand, typically either assumes stationarity or uses stationary bounds.

In this work, we use the notion of time-variant systems~\cite{chang:dynamicserviceguarantees, agrawal:timevaryingservice} to model non-stationary service characteristics. Time-variant systems are described by bivariate instead of univariate functions to consider changes over time. We show that non-stationarity can be dealt with in the stochastic network calculus using a similar notational extension. We contribute a non-stationary service curve model that characterizes the service of systems during transient phases. While many known results are recovered for the extended model, certain fundamental properties such as commutativity~\cite{chang:dynamicserviceguarantees} differ. We derive solutions for systems with sleep scheduling, provide insights into the transient behavior, and quantify the considerable transient overshoot.
Secondly, we examine methods for estimation of a system's service curve from measurements of probe traffic. We refine known measurement methods for estimation of non-stationary service curves, where we encounter additional difficulties that are attributed to the non-convexity and super-additivity of the service. We devise a novel minimal probing method that estimates a non-stationary service curve and provides a measure of accuracy.

The remainder of this work is structured as follows. In Sec.~\ref{sec:systemmodel}, we define non-stationary service curves, show a method for construction, and derive models of systems with sleep scheduling. In Sec.~\ref{sec:estimation}, we investigate the measurement-based estimation of non-stationary service curves. We reveal difficulties that arise and devise a new minimal probing method. We discuss further related works in the respective sections. Sec.~\ref{sec:conclusion} presents brief conclusions. While we restrict the exposition to non-stationary systems, we note that non-stationary traffic can be dealt with in the same way.
%
%------------------------------------------------------------------------
%
\section{System Model}
\label{sec:systemmodel}
In this section, we derive a model of non-stationary service curves (Sec.~\ref{sec:systemmodelrandom}) that enable analyzing the performance of systems during transient phases. The basis of this model is a definition of regenerative service processes (Sec.~\ref{sec:systemmodelregenerative}), where regeneration points mark the start of new transient phases. We include solutions for sleep scheduling and show its impact on the performance.
%
%------------------------------------------------------------------------
%
\subsection{Regenerative Service Processes}
\label{sec:systemmodelregenerative}
We consider a system with cumulative arrivals $A(t)$, where $A(t)$ denotes the number of bits that arrive in the time interval $(0,t]$. By convention, there are no arrivals for $t \le 0$ so that we generally consider $t \ge 0$. Clearly, $A(t)$ is a non-negative, non-decreasing function, and $A(0)=0$. Shorthand notation $A(\tau,t) = A(t) - A(\tau)$ is used to denote the arrivals in $(\tau,t]$ where $t \ge \tau \ge 0$. Trivially, $A(t,t)=0$ for all $t \ge 0$. Similarly, $D(t)$ denotes the cumulative departures from the system.

The service that is provided by the system is characterized by a time-variant service process $S(\tau,t)$ that establishes the departure guarantee~\cite{chang:performanceguarantees,chang:timevaryingfiltering,chang:dynamicserviceguarantees,agrawal:timevaryingservice}
\begin{equation}
D(t) \ge \inf_{\tau \in [0,t]} \{ A(\tau) + S(\tau,t) \} =: A \otimes S(t).
\label{eq:serviceprocess}
\end{equation}
By convention, $S(\tau,t)$ is non-negative and $S(t,t) = 0$ for all $t \ge 0$. The operator $\otimes$ that is defined by~\eqref{eq:serviceprocess} is known as convolution under a min-plus algebra~\cite{chang:performanceguarantees,leboudec:networkcalculus}. We note that $\otimes$ is associative but not commutative in general. Examples of~\eqref{eq:serviceprocess} include a work-conserving server with a time-variant capacity~\cite{chang:performanceguarantees} where $S(\tau,t)$ denotes the service that is available in the interval $(\tau,t]$; scheduling with cross-traffic~\cite{fidler:momentcalculus}; and networks of systems where the network service process $S^{net}(\tau,t)$ is computed from the service processes $S^i(\tau,t)$ of the individual systems $i=1 \dots n$ by recursive insertion of~\eqref{eq:serviceprocess} as $S^{net}(\tau,t) = S^1 \otimes S^2 \otimes \dots \otimes S^n(\tau,t)$~\cite{chang:performanceguarantees}. The service guarantee~\eqref{eq:serviceprocess} enables the derivation of performance bounds. An upper bound of the backlog $B(t) = A(t) - D(t)$ follows by insertion of~\eqref{eq:serviceprocess} as
\begin{equation}
B(t) \le \sup_{\tau \in[0,t]} \{A(\tau,t) - S(\tau,t)\}.
\label{eq:backlogbound}
\end{equation}
Similarly, the delay defined as $W(t) = \inf\{w \ge 0: A(t) \le D(t+w) \}$ can be considered.

Throughout this work, we assume that the service is a regenerative process~\cite{ross:probabilitymodels} with regeneration points $\mathbb{P} = \{P_0,P_1,P_2,\dots\}$ where $P_0=0$ and $P_i < P_{i+1}$ for all $i \ge 0$. We divide $S(\tau,t)$ into segments
\begin{equation}
S_i(\tau,t) = S(\tau+P_i,t+P_i)
\label{eq:regenerativeservicsample}
\end{equation}
for all $0 \le \tau \le t \le P_{i+1} - P_i$ and $i \ge 0$, where $S_i(\tau,t)$ is the service process between the $i$th and the $(i+1)$th regeneration point. The defining characteristic of a regenerative process is that the $S_i(\tau,t)$ are statistical replicas, i.e.,
\begin{equation}
\mathsf{P}[S_i(\tau,t) \le x] = \mathsf{P}[S_j(\tau,t) \le x]
\label{eq:statisticalreplica}
\end{equation}
for all $i,j,x \ge 0$, and $0 \le \tau \le t \le \min \{P_{i+1}\!-\!P_i, P_{j+1}\!-\!P_j \}$. Owing to~\eqref{eq:statisticalreplica}, we omit the index $i$ in the sequel. Also, we will not explicitly mention the constraint $t \le P_{i+1}$ and assume that the next regeneration point $P_{i+1}$ is spaced sufficiently apart.
%
%------------------------------------------------------------------------
%
\vspace{5pt}
\subsubsection*{Deterministic Sleep Scheduling}
We present a first application to sleep scheduling, where we consider a transmitter and a receiver that if idle go to sleep state according to a defined protocol. Wake up is scheduled deterministically, $T$ units of time after entering the sleep state. The transmission rate in sleep state is zero and otherwise it is $R$. Clearly, each transition to sleep state is a regeneration point and the time-variant service process follows for $t \ge \tau \ge 0$ as
\begin{equation*}
S(\tau,t) =
\begin{cases}
0, & t \le T \\
R (t-T), & t > T, \tau \le T \\
R (t-\tau), & t > T, \tau > T
\end{cases}
\end{equation*}
so that
\begin{equation}
S(\tau,t) = [R(t-\max\{\tau,T\})]_+
\label{eq:nonstationarylatencyrate}
\end{equation}
where $[x]_+ = \max \{0,x\}$ is the non-negative part of $x$.

For numerical evaluation we use a discrete time equivalent of a stationary Poisson arrival process: the number of packet arrivals $N(t)$ in an interval of length $t$ is binomial with parameter $\alpha \in [0,1]$. The individual packet sizes $Y(i)$ with index $i = 1,2,\dots$ are independent and identically distributed (iid) geometric random variables with parameter $\beta \in (0,1]$. Parameter $\alpha$ has the interpretation of an average arrival rate and $1/\beta$ is the average size of packets. For the special case of a system with constant service rate $R=1$, $\alpha/\beta$ is the utilization and $\alpha < \beta$ is required for stability.

We characterize the process using an upper envelope function that is derived from its moment generating function (MGF). The MGF of a random variable $X$ is defined as $\mathsf{M}_X(\theta) = \mathsf{E}[e^{\theta X}]$ for any $\theta$. The respective MGFs of the above processes are $\mathsf{M}_N(\vartheta,t) = (\alpha e^{\vartheta} + 1-\alpha)^t$ and $\mathsf{M}_Y(\theta) = \beta e^{\theta}/(1-(1-\beta)e^{\theta})$ for $\theta \in [0,-\ln(1-\beta))$~\cite{ross:probability}. The cumulative arrival process $A(t)$ is the doubly stochastic process $A(t) = \sum_{i=1}^{N(t)} Y(i)$. It has MGF $\mathsf{M}_A(\theta,t) = \mathsf{M}_N(\ln\mathsf{M}_Y(\theta),t)$~\cite{ross:probability, fidler:netcalcguide} so that by insertion
\begin{equation}
\mathsf{M}_A(\theta,t) = \biggl(\frac{\alpha \beta e^{\theta}}{1-(1-\beta)e^{\theta}} + 1-\alpha \biggr)^t .
\label{eq:poissonmgf}
\end{equation}

Using Chernoff's theorem $\mathsf{P}[X \ge x] \le e^{-\theta x} \mathsf{M}_X(\theta)$ for $\theta \ge 0$ and established methods of the stochastic network calculus~\cite{li:effectivebandwidthcalculus2, ciucu:networkservicecurvescaling2} it can be shown\footnote{We omit the proof as it is dual to the derivation of~\eqref{eq:statisticalserviceenvelope} and considers only the special case of a stationary process.} that the function
\begin{equation}
\mathcal{A}^{\varepsilon}(t) = \frac{1}{\theta(t)} \left(\ln \mathsf{M}_A(\theta,t) + \rho t - \ln(\rho\varepsilon) \right)
\label{eq:statisticalenvelope}
\end{equation}
is a statistical envelope function of $A(t)$ that provides the sample path guarantee
\begin{equation}
\mathsf{P}[A(\tau,t) \le \mathcal{A}^{\varepsilon}(t-\tau),\, \forall \tau \in [0,t] ] \ge 1-\varepsilon
\label{eq:statisticalenvelopedefinition}
\end{equation}
for all $t \ge 0$. Above, $\varepsilon \in (0,1]$ is a probability of overflow, and $\theta(t) > 0$ and $\rho \in (0,1/\varepsilon]$ are free parameters\footnote{Compared to related works, we use a time-variant parameter $\theta(t)$ instead of a constant $\theta$. This allows optimizing $\theta(t)$ to minimize $\mathcal{A}^{\varepsilon}(t)$ individually for each $t$, which facilitates a computationally efficient implementation.}. With~\eqref{eq:statisticalenvelopedefinition}, statistical performance bounds follow readily by substitution of $\mathcal{A}^{\varepsilon}(t-\tau)$ for $A(\tau,t)$, e.g., the backlog bound~\eqref{eq:backlogbound} yields
\begin{equation}
\mathsf{P} \biggl[B(t) \le \sup_{\tau \in [0,t]} \{\mathcal{A}^{\varepsilon}(t-\tau) - S(\tau,t)\}\biggr] \ge 1-\varepsilon .
\label{eq:statisticalbacklogboundhalf}
\end{equation}

In Fig.~\ref{fig:backlog_qt}, we illustrate the progression of the backlog bound~\eqref{eq:statisticalbacklogboundhalf} over time. The parameters of the service process~\eqref{eq:nonstationarylatencyrate} are $T=100$ and $R=1$. For the arrival process~\eqref{eq:poissonmgf} we use $\alpha=0.09$ and $\beta=0.3$ corresponding to a utilization of $0.3$. We choose $\varepsilon=10^{-9}$ and optimize the free parameters $\theta > 0$ and $\rho \in (0,1/\varepsilon]$ of~\eqref{eq:statisticalenvelope} numerically. Also, we include a backlog bound from the deterministic network calculus, where the service $S_{inv}(\tau,t) = R [t-\tau-T]_+ = S_{inv}(t-\tau)$ is a univariate time-invariant function that depends only on the width of the interval $(\tau,t]$. Otherwise, it considers the worst-case, that is attained for $\tau=0$ where $S_{inv}(t) = S(0,t)$ from~\eqref{eq:nonstationarylatencyrate}. Fig.~\ref{fig:backlog_qt} shows that both, the time-variant and the time-invariant service model, reveal the same growth of the backlog bound until service starts at $T=100$. How the transient backlog is cleared after $T$ and eventually converges to a stationary backlog bound is, however, only explained by the time-variant model. To see why the time-invariant model fails note that the $\sup$ in~\eqref{eq:statisticalbacklogboundhalf} is non-decreasing in $t$ if $S$ is a univariate function.
\begin{figure}
\centering
\includegraphics[width=0.51\columnwidth]{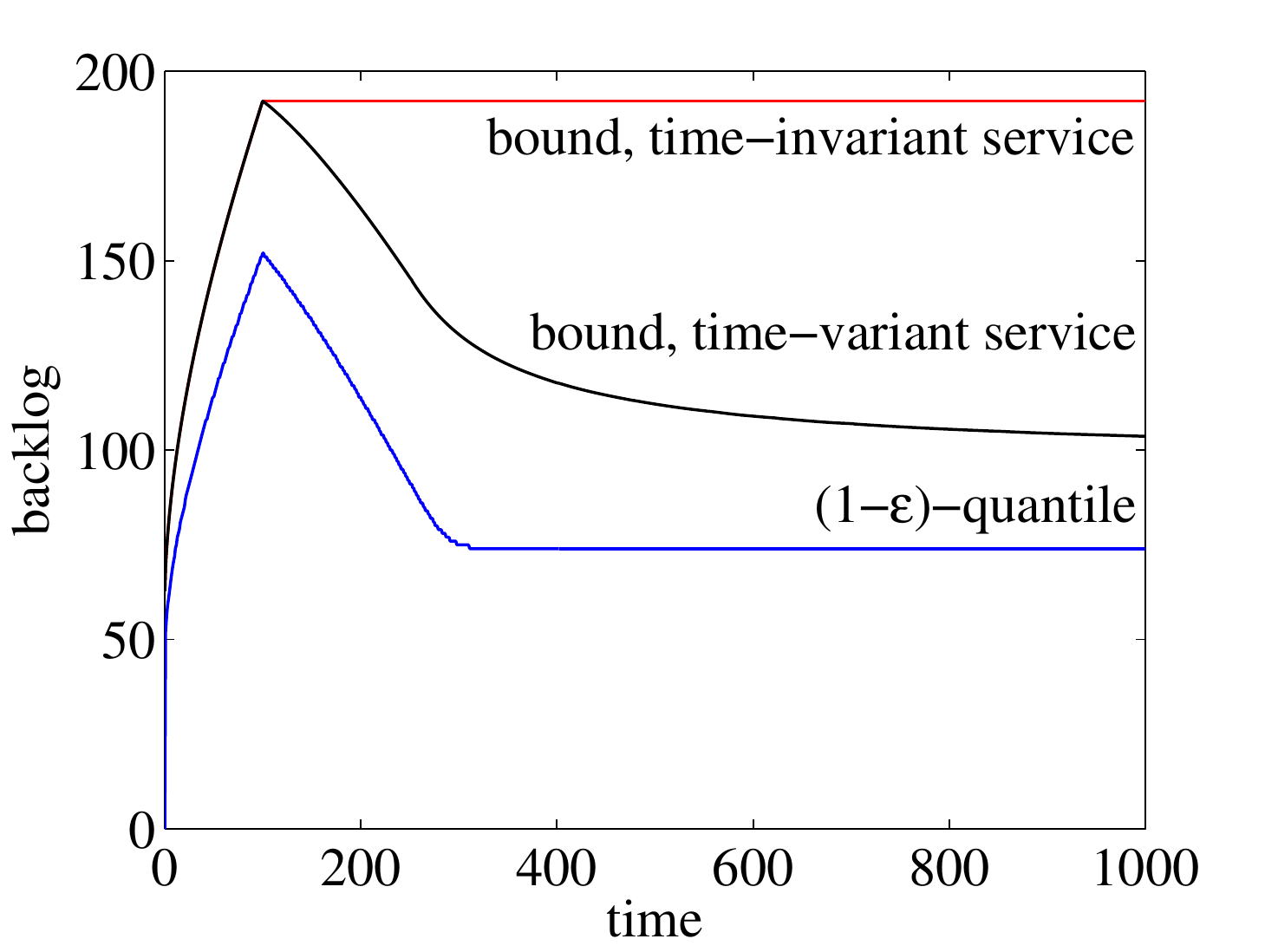}
\vspace{-4pt}
\caption{Progression of the transient backlog over time. The time-variant service model correctly estimates the shape of the $(1-\varepsilon)$-quantile. }
\label{fig:backlog_qt}
\vspace{-10pt}
\end{figure}

Further, Fig.~\ref{fig:backlog_qt} includes the exact backlog quantile for comparison. For the discrete time Poisson model, the solution can be readily obtained from a discrete time Markov chain. The state of the Markov chain $K(t) \ge 0$ represents the number of arrivals that are in the system at $t$. For $t \le T$ the transition matrix $\mathbf{Q}(t)$ is composed of the probabilities $q_{i,i} = 1-\alpha$, $q_{i,i+1} = \alpha$, and all other $q_{i,j} = 0$. For $t > T$ the service starts so that $q_{i,i-1} = (1-\alpha)\beta$, $q_{i,i} = (1-\alpha)(1-\beta) + \alpha\beta$, $q_{i,i+1} = \alpha(1-\beta)$, and all other $q_{i,j} = 0$. The Markov chain starts in state $K(0) = 0$, i.e., the initial state distribution $\mathbf{P}(0)$ is the column vector $(1,0,0,\dots)$. The state distribution for $t > 0$ follows by repeated insertion of $\mathbf{P}(t) = \mathbf{Q}(t) \mathbf{P}(t-1)$.

Clearly, for $t \le T$ the state distribution is binomial, whereas for $t > T$ the distribution makes a transition and for $t \rightarrow \infty$ attains the geometric stationary state distribution~\cite{arita:discretemm1}
\begin{equation*}
\mathsf{P}[K(\infty)=k] = \frac{\beta - \alpha}{\beta (1-\alpha)} \left(\frac{\alpha(1-\beta)}{(1-\alpha)\beta}\right)^k .
\end{equation*}
The backlog distribution can be computed as
\begin{equation*}
\mathsf{P}[B(t)=b] = \sum_{k=1}^{\infty} \mathsf{P}[B(t)=b|K(t)=k] \, \mathsf{P}[K(t)=k]
\end{equation*}
for $b > 0$ and $\mathsf{P}[B(t)=0]=\mathsf{P}[K(t)=0]$ for $b=0$. The conditional backlog in state $k$ is the sum of $k$ geometric random variables that is negative binomial, i.e., for $k,b > 0$
\begin{equation*}
\mathsf{P}[B(t)=b|K(t)=k] = \binom{b-1}{k-1} \beta^{k-1} (1-\beta)^{b-k} .
\end{equation*}

The $(1-\varepsilon)$-quantile of $B(t)$ is depicted in Fig.~\ref{fig:backlog_qt}. We observe that the bound from the time-variant service model provides a good estimate that recovers the shape of the quantile. We note that the deviation is due to bounds that are invoked in the derivation of the Poisson envelope. While the network calculus literature includes envelopes for non-trivial arrival processes including self-similar, long-range dependent~\cite{rizk:fbm,liebeherr:heavytailed}, and heavy-tailed processes~\cite{liebeherr:heavytailed}, tighter martingale bounds are available, e.g., for Poisson~\cite{ciucu:exactnetworkcalculus}, Markov, and autoregressive processes~\cite{chang:performanceguarantees}.

Fig.~\ref{fig:parameters_qt} presents the impact of the arrival rate $\alpha$ for $T=100$ and the impact of the sleep cycle $T$ for $\alpha=0.15$ on the backlog quantile. The remaining parameters are as in Fig.~\ref{fig:backlog_qt}. The measures of interest~\cite{wang:transientatm} are the maximum overshoot compared to the steady-state and the relaxation time, i.e., the time that is required to reach within a defined range the steady-state backlog. Fig.~\ref{fig:load_qt} shows that $\alpha$ has a significant impact on both quantities. Interestingly, if $\alpha$ is large, the maximum overshoot occurs after $T$, i.e., during the transition from binomial to geometric state distribution. The relaxation time reaches values that are larger than $T$ by an order of magnitude.
\begin{figure}
\hspace{-10pt}
\subfigure[$\alpha = \{0.09,0.12,\dots,0.21\}$]{
\includegraphics[width=0.51\columnwidth]{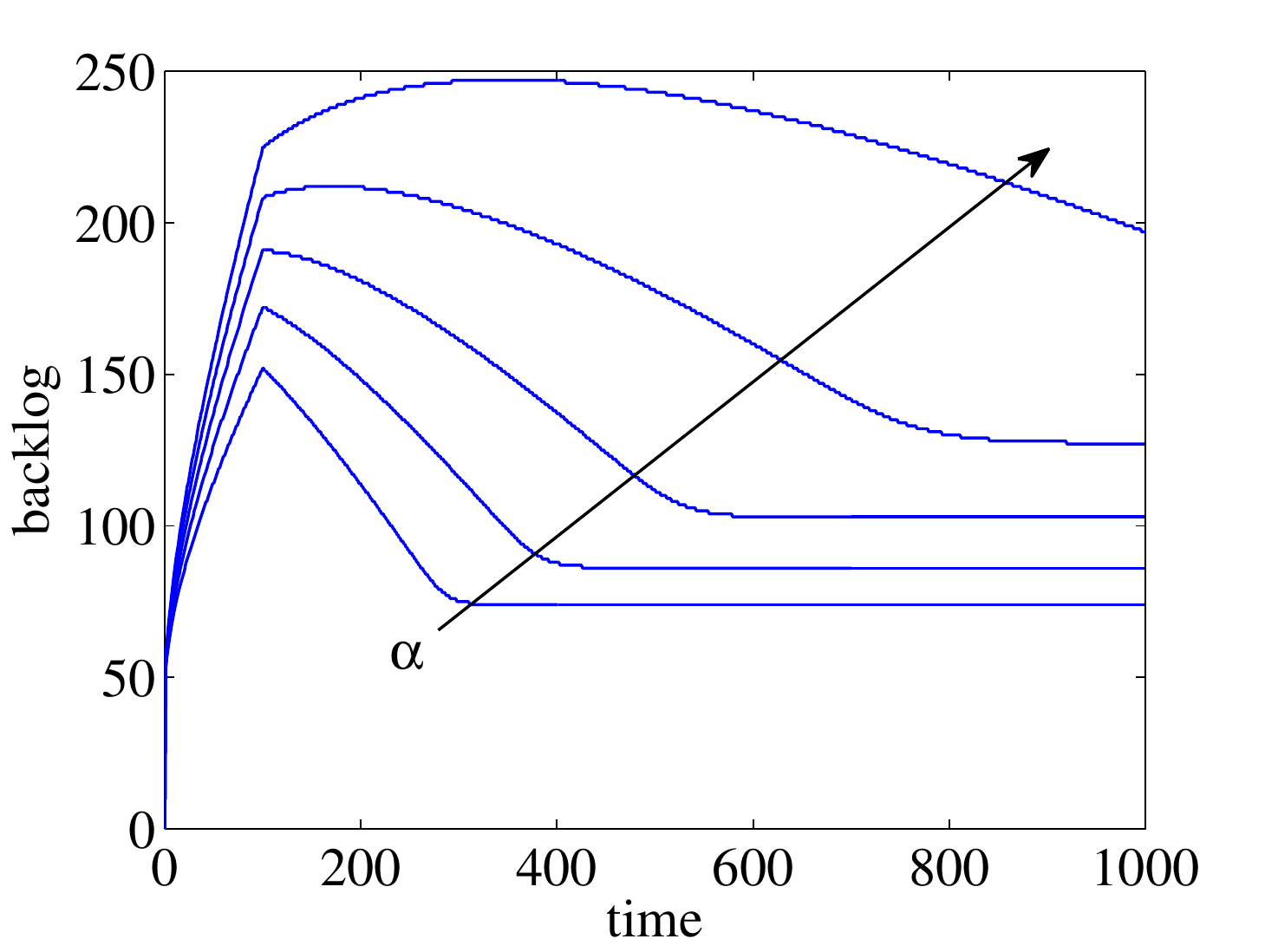}
\label{fig:load_qt}
}
\hspace{-10pt}
\subfigure[$T=\{0,25,\dots,150\}$]{
\includegraphics[width=0.51\columnwidth]{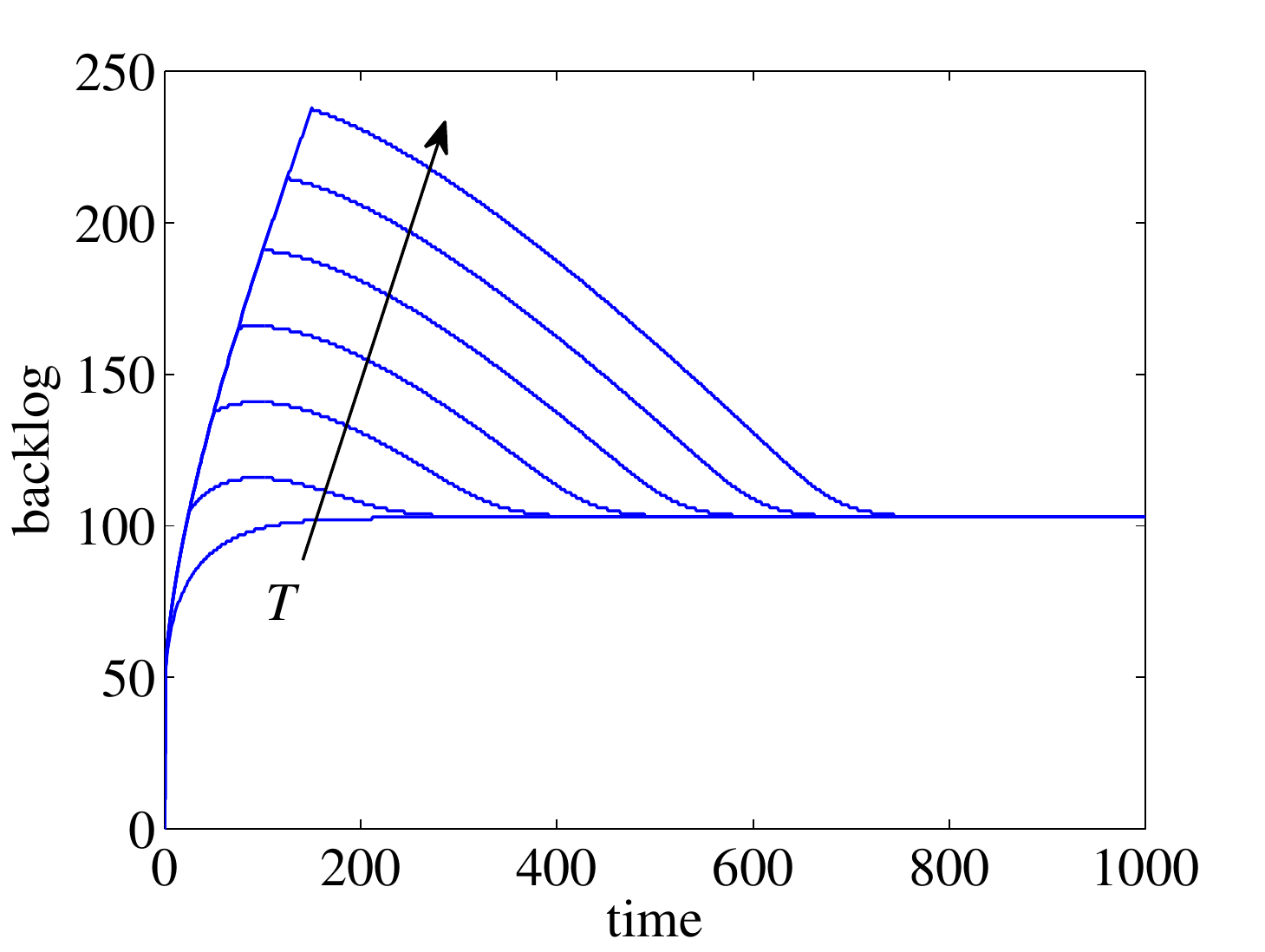}
\label{fig:convergence_qt}
\hspace{-10pt}
}
\vspace{-10pt}
\caption{Impact of arrival rate $\alpha$ and sleep cycle $T$ on the backlog quantile.}
\label{fig:parameters_qt}
\vspace{-10pt}
\end{figure}
%
%------------------------------------------------------------------------
%
\subsection{Non-stationary Service Curves}
\label{sec:systemmodelrandom}
Next, we consider $S(\tau,t)$ as a non-stationary random service process. We define a bivariate envelope function $\mathcal{S}^{\varepsilon}(\tau,t)$ that conforms to
\begin{equation}
\mathsf{P}[S(\tau,t) \ge \mathcal{S}^{\varepsilon}(\tau,t),\, \forall \tau \in [0,t]] \ge 1-\varepsilon
\label{eq:bivariateenvelope}
\end{equation}
for all $t \ge 0$ where $\varepsilon \in (0,1]$ is a probability of underflow. Adding $A(\tau)$ to both sides we have
\begin{equation}
\mathsf{P}[A(\tau) + S(\tau,t) \ge A(\tau) + \mathcal{S}^{\varepsilon}(\tau,t),\, \forall \tau \in [0,t]] \ge 1-\varepsilon .
\label{eq:bivariateenvelope2}
\end{equation}
Since~\eqref{eq:bivariateenvelope2} makes a sample path argument for all $\tau \in [0,t]$, it follows that
\begin{equation*}
\mathsf{P}\biggl[\inf_{\tau \in [0,t]} \{A(\tau) + S(\tau,t)\} \ge \inf_{\tau \in [0,t]} \{A(\tau) + \mathcal{S}^{\varepsilon}(\tau,t)\}\biggr] \ge 1-\varepsilon .
\end{equation*}
Using~\eqref{eq:serviceprocess} we conclude that
\begin{equation}
\mathsf{P} [D(t) \ge A \otimes \mathcal{S}^{\varepsilon}(t)] \ge 1-\varepsilon.
\label{eq:servicecurve}
\end{equation}
We refer to $\mathcal{S}^{\varepsilon}(\tau,t)$ as non-stationary service curve. It extends the notion of effective service curve~\cite{burchard:endtoendstatisticalcalculus} that defines~\eqref{eq:servicecurve} for univariate functions $\mathcal{S}^{\varepsilon}(t-\tau)$. Compared to~\cite{burchard:endtoendstatisticalcalculus}, the definition of a bivariate function $\mathcal{S}^{\varepsilon}(\tau,t)$ provides a service guarantee that has the capability to consider transient changes over time.

Next, we use the negative MGF, respectively, Laplace transform of $S(\tau,t)$ denoted $\mathsf{M}_S(-\theta,\tau,t) = \mathsf{E}[e^{-\theta S(\tau,t)}]$ to derive a non-stationary service curve. We show that the function
\begin{equation}
\mathcal{S}^{\varepsilon}(\tau,t) = -\frac{1}{\theta(\tau,t)} ( \ln \mathsf{M}_S(-\theta,\tau,t) + \rho (t-\tau) - \ln(\rho\varepsilon)\!) ,
\label{eq:statisticalserviceenvelope}
\end{equation}
where $\theta(\tau,t) > 0$ and $\rho \in (0,1/\varepsilon]$ are free parameters, satisfies the service curve guarantee~\eqref{eq:servicecurve}.

For completeness, we include the derivation of~\eqref{eq:statisticalserviceenvelope} that extends~\cite{Fidler:2014:CDE} to non-stationary processes. The derivation employs basic steps from the stochastic network calculus~\cite{li:effectivebandwidthcalculus2, ciucu:networkservicecurvescaling2}. We use the complementary formulation of~\eqref{eq:bivariateenvelope}
\begin{equation*}
\xi := \mathsf{P}[\exists \tau \in [0,t]:  S(\tau,t) < \mathcal{S}^{\varepsilon}(\tau,t)] \le \varepsilon .
\end{equation*}
to prove that $\mathcal{S}^{\varepsilon}(\tau,t)$ defined in~\eqref{eq:statisticalserviceenvelope} satisfies~\eqref{eq:bivariateenvelope}, or equivalently that $\xi \le \varepsilon$. Using the union bound and Chernoff's lower bound $\mathsf{P}[X \le x] \le e^{\theta x} \mathsf{M}_X(-\theta)$ for $\theta \ge 0$ it holds that
\begin{equation*}
\xi \le \! \sum_{\tau=0}^{t-1} \mathsf{P}[S(\tau,t) < \mathcal{S}^{\varepsilon}(\tau,t)] \le \! \sum_{\tau=0}^{t-1} e^{\theta(\tau,t) \mathcal{S}^{\varepsilon}(\tau,t)} \mathsf{M}_S(-\theta,\tau,t) ,
\end{equation*}
where $\theta(\tau,t) \ge 0$ is a set of free parameters. The case where $\tau=t$ is omitted since $S(t,t) = 0$ and $\mathcal{S}^{\varepsilon}(t,t) \le 0$ by definition. By insertion of $\mathcal{S}^{\varepsilon}(\tau,t)$ from~\eqref{eq:statisticalserviceenvelope} it follows that
\begin{equation*}
\xi \le \rho\varepsilon \! \sum_{\tau=0}^{t-1} \! e^{-\rho(t-\tau)} = \rho\varepsilon \! \sum_{\upsilon=1}^{t} \! e^{-\rho\upsilon} \le \rho\varepsilon \! \int_{0}^{\infty} \!\!\! e^{-\rho y} dy = \varepsilon
\end{equation*}
where each summand is bounded by $e^{-\rho\upsilon} \le \int_{\upsilon-1}^{\upsilon} e^{-\rho y} dy$ since $e^{-\rho\upsilon}$ is decreasing. Finally, letting $t \rightarrow \infty$ and solving the integral completes the proof that $\xi \le \varepsilon$ for all $t \ge 0$.
%
%------------------------------------------------------------------------
%
\vspace{5pt}
\subsubsection*{Random Sleep Scheduling}
We extend the deterministic sleep scheduling model from Sec.~\ref{sec:systemmodelregenerative} and consider a non-stationary work-conserving system with random service increments $Z(t)$ for $t \ge 0$. When entering sleep state, the system regenerates and wakes up after a random time $T \ge 0$, i.e., $Z(t)=0$ for $t \in [0,T]$. The service process is computed as $S(\tau,t) = \sum_{\upsilon=\tau+1}^t Z(\upsilon)$ for all $t > \tau \ge 0$ and $S(t,t)=0$ for all $t \ge 0$~\cite{chang:performanceguarantees}. To derive the MGF of $S(\tau,t)$, we first consider the number of usable time-slots in $(\tau,t]$, i.e., after time $T$
\begin{equation*}
U(\tau,t) = [t-\max\{\tau,T\}]_+ .
\end{equation*}
The MGF of $U(\tau,t)$ is composed of three terms
\begin{multline}
\mathsf{M}_U(\theta,\tau,t) = \\ e^{\theta (t-\tau)} \mathsf{P}[T \le \tau] + \sum_{\upsilon = \tau+1}^{t} e^{\theta(t-\upsilon)} \mathsf{P}[T = \upsilon] + \mathsf{P}[T > t] ,
\label{eq:slotsafterT}
\end{multline}
that correspond to the cases where the start of the service $T$ occurs before and including $\tau$, within $(\tau,t]$, and after $t$, respectively. Given the service increments $Z(t)$ for $t > T$ are iid with MGF $\mathsf{M}_Z(\theta)$, the MGF of the service process is~\cite{ross:probability}
\begin{equation}
\mathsf{M}_S(\theta,\tau,t) \!=\! \mathsf{E}\bigl[(\mathsf{M}_Z(\theta))^{U(\tau,t)}\bigr] \!=\! \mathsf{M}_U(\ln \mathsf{M}_Z(\theta),\tau,t) .
\label{eq:servicemgf}
\end{equation}

For a concrete example, we model $T$ as a geometric random variable with parameter $p$, where $\mathsf{P}[T = \upsilon] = p(1-p)^\upsilon$, and $Z(t)$ for $t > T$ as iid Bernoulli trials with parameter $q$. Due to the memorylessness of the processes, solutions for this specific example may also be derived, e.g., from a Markov model. We use this example as it enables us to compute certain reference results in Sec.~\ref{sec:estimation}. Note that the service curve~\eqref{eq:statisticalserviceenvelope} in general is not limited to memoryless processes.

Regarding~\eqref{eq:slotsafterT} we have $\mathsf{P}[T \le \tau] = 1-(1-p)^{\tau+1}$ and
\begin{equation*}
\sum_{\upsilon = \tau+1}^{t} e^{\theta(t-\upsilon)} \mathsf{P}[T = \upsilon] = e^{\theta t} p \sum_{\upsilon = \tau+1}^{t} \bigl(e^{-\theta} (1-p)\bigr)^\upsilon ,
\end{equation*}
where we substitute $y = e^{-\theta} (1-p)$ and compute
\begin{equation*}
\sum_{\upsilon = \tau+1}^{t} y^{\upsilon} = \frac{y^{\tau+1}-y^{t+1}}{1-y}.
\end{equation*}
Having obtained a solution of~\eqref{eq:slotsafterT}, the MGF of the service process follows from~\eqref{eq:servicemgf} by insertion of $\mathsf{M}_Z(\theta) = q e^{\theta} + 1-q$ for a Bernoulli service increment process. Finally, the non-stationary service curve is computed from~\eqref{eq:statisticalserviceenvelope}.

\begin{figure}
\hspace{-10pt}
\subfigure[$\tau = \{0,100,\dots\}$]{
\includegraphics[width=0.51\columnwidth]{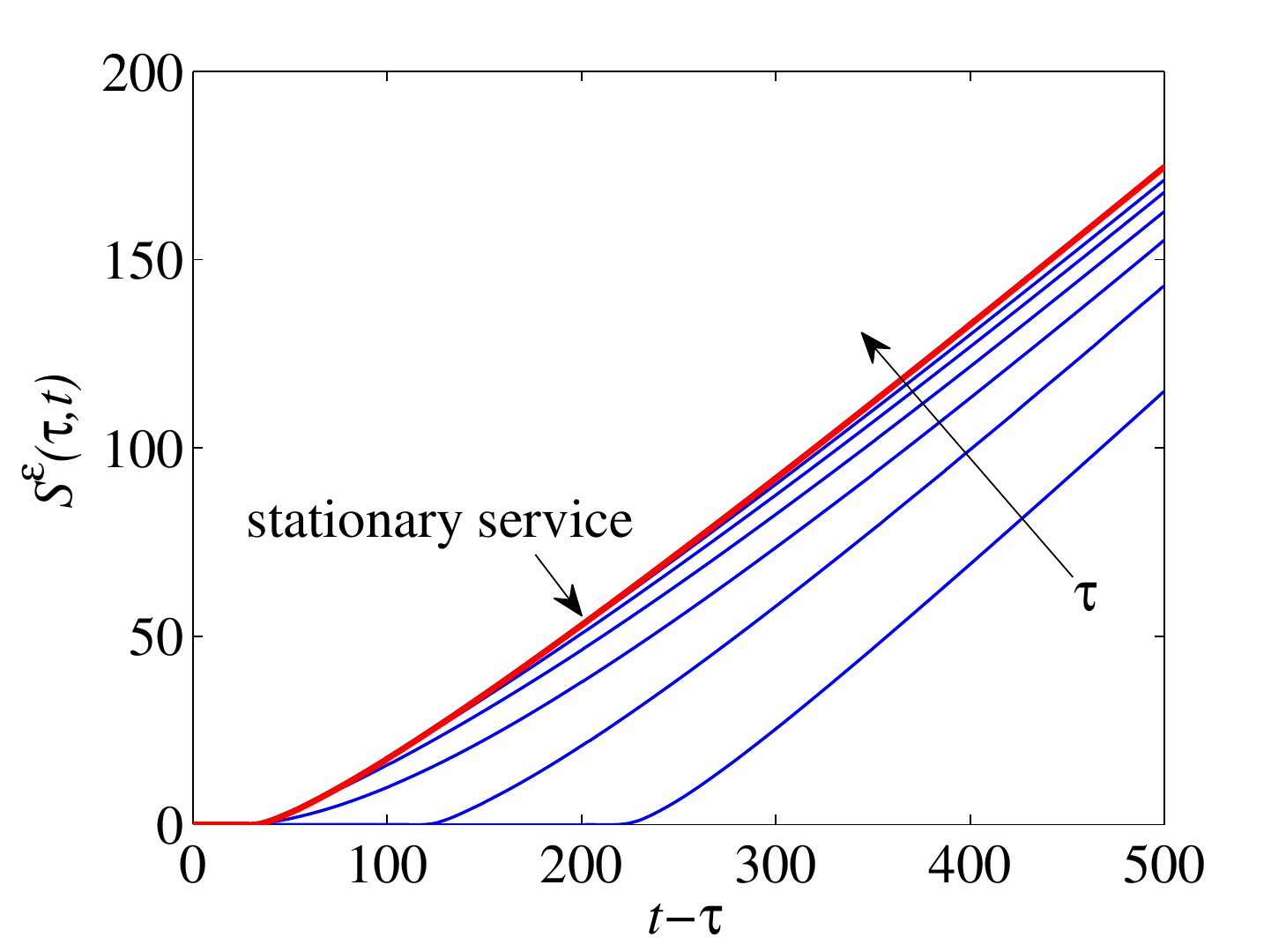}
\label{fig:servicecurvesforward}
}
\hspace{-10pt}
\subfigure[$t = \{0,100,\dots\}$]{
\includegraphics[width=0.51\columnwidth]{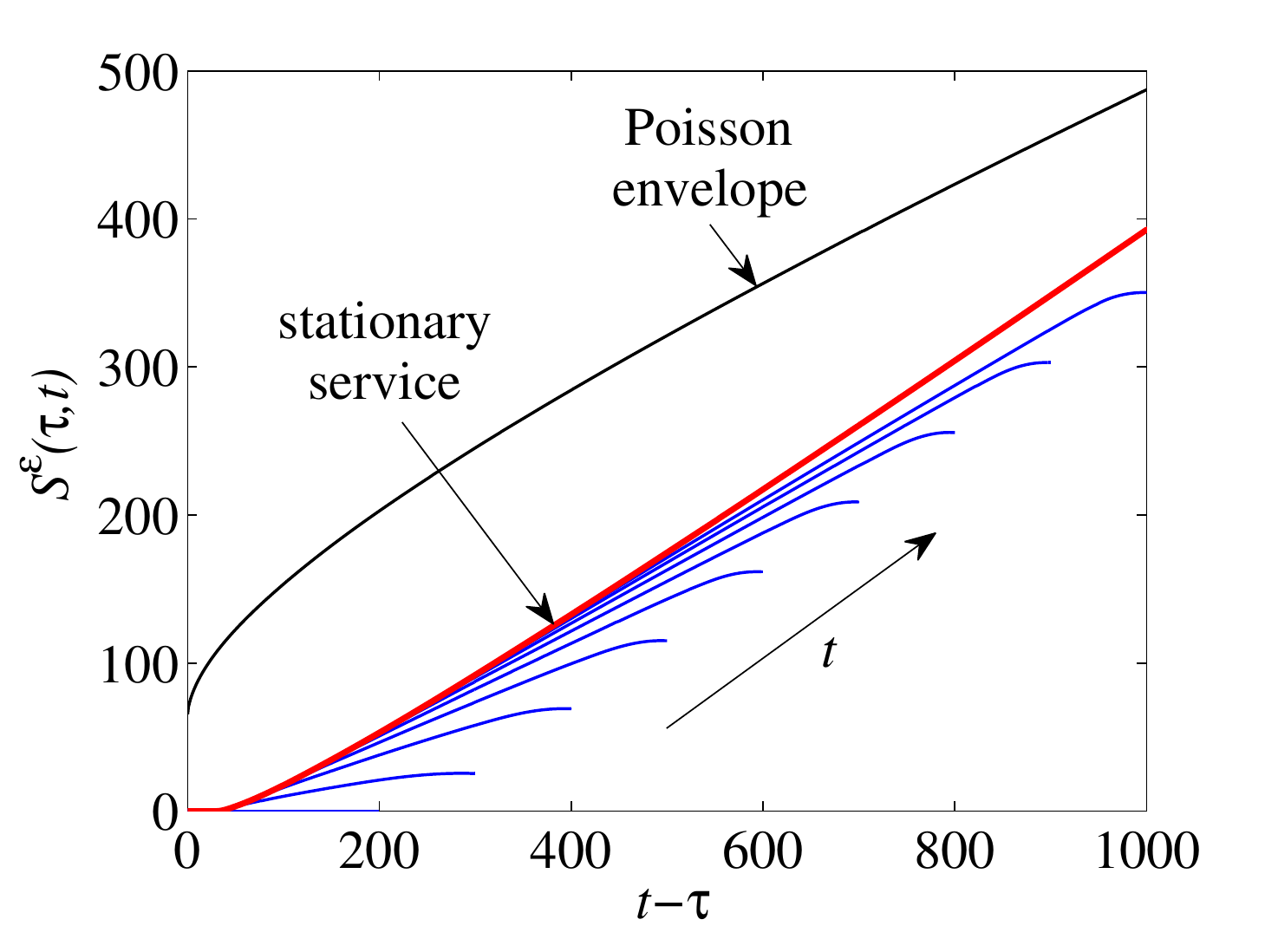}
\label{fig:servicecurvesbackwards}
\hspace{-10pt}
}
\vspace{-10pt}
\caption{Non-stationary service curves of random sleep scheduling.}
\label{fig:servicecurves}
\vspace{-10pt}
\end{figure}
Fig.~\ref{fig:servicecurves} illustrates $\mathcal{S}^{\varepsilon}(\tau,t)$ for $p=0.1$ and $q=0.5$. The remaining parameters of the service curve are $\varepsilon = 10^{-6}$, $\rho = 10^{-4}$, and $\theta(\tau,t)$ is optimized numerically. In Fig.~\ref{fig:servicecurvesforward}, we show how the service in an interval of width $t-\tau$ increases with increasing distance $\tau$ from the last regeneration point. For small $\tau$ a significant impact of the initial transient phase is noticed. For large $\tau$ we observe that $\mathcal{S}^{\varepsilon}(\tau,t)$ converges towards a stationary service curve that is computed for a Bernoulli service increment process without sleep scheduling. Fig.~\ref{fig:servicecurvesbackwards} displays service curves for fixed $t$ and variable $\tau$. Here, the transient phase is reflected in the non-convex shape of the curves, rightwards where $\tau$ approaches zero. In contrast, the initial delay at the origin, that also applies to the stationary service curve, is caused by the Bernoulli service increment process. For small intervals $t-\tau$ the process results in a service of zero with non-negligible probability. For $t = 100$ and for $t=200$ the effects bring about a service curve of zero.

The presentation of $\mathcal{S}^{\varepsilon}(\tau,t)$ in Fig.~\ref{fig:servicecurvesbackwards} conforms with the formulation of statistical performance bounds, where $t$ is fixed and all $\tau \in [0,t]$ are evaluated. A statistical backlog bound follows from~\eqref{eq:servicecurve} with~\eqref{eq:statisticalenvelopedefinition} as
\begin{equation}
\mathsf{P} \Bigl[B(t) \le \sup_{\tau \in [0,t]} \{\mathcal{A}^{\varepsilon}(t-\tau) - \mathcal{S}^{\varepsilon}(\tau,t)\}\Bigr] \ge 1-2\varepsilon
\label{eq:statisticalbacklogbound}
\end{equation}
and a first-come first-served delay bound as
\begin{multline}
\!\!\!\!\!\mathsf{P} \Bigl[W(t) \le \inf \Bigl\{w \! \ge \! 0: \! \sup_{\tau \in [0,t]} \{\mathcal{A}^{\varepsilon}(t-\tau) - \mathcal{S}^{\varepsilon}(\tau,t+w)\} \le 0 \Bigr\}\Bigr] \\ \ge 1-2\varepsilon .
\label{eq:statisticaldelaybound}
\end{multline}
Intuitively, the backlog and delay bound are the maximal vertical and horizontal deviation of $\mathcal{A}^{\varepsilon}(t-\tau)$ and $\mathcal{S}^{\varepsilon}(\tau,t)$, respectively. For illustration, Fig.~\ref{fig:servicecurvesbackwards} includes the Poisson arrival envelope from Sec.~\ref{sec:systemmodelregenerative} for $\alpha = 0.06$, $\beta = 0.3$, and $\varepsilon = 10^{-6}$. Corresponding backlog and delay bounds are shown in Fig.~\ref{fig:backlogdelay} for different parameters of the service $p$ and $q$ where the long-term utilization is $\alpha/(\beta q)$. The remaining parameters $\theta$ and $\rho$ are optimized numerically. Compared to the stationary case, the transient overshoot of sleep scheduling is considerable.
\begin{figure}
\hspace{-10pt}
\includegraphics[width=0.51\columnwidth]{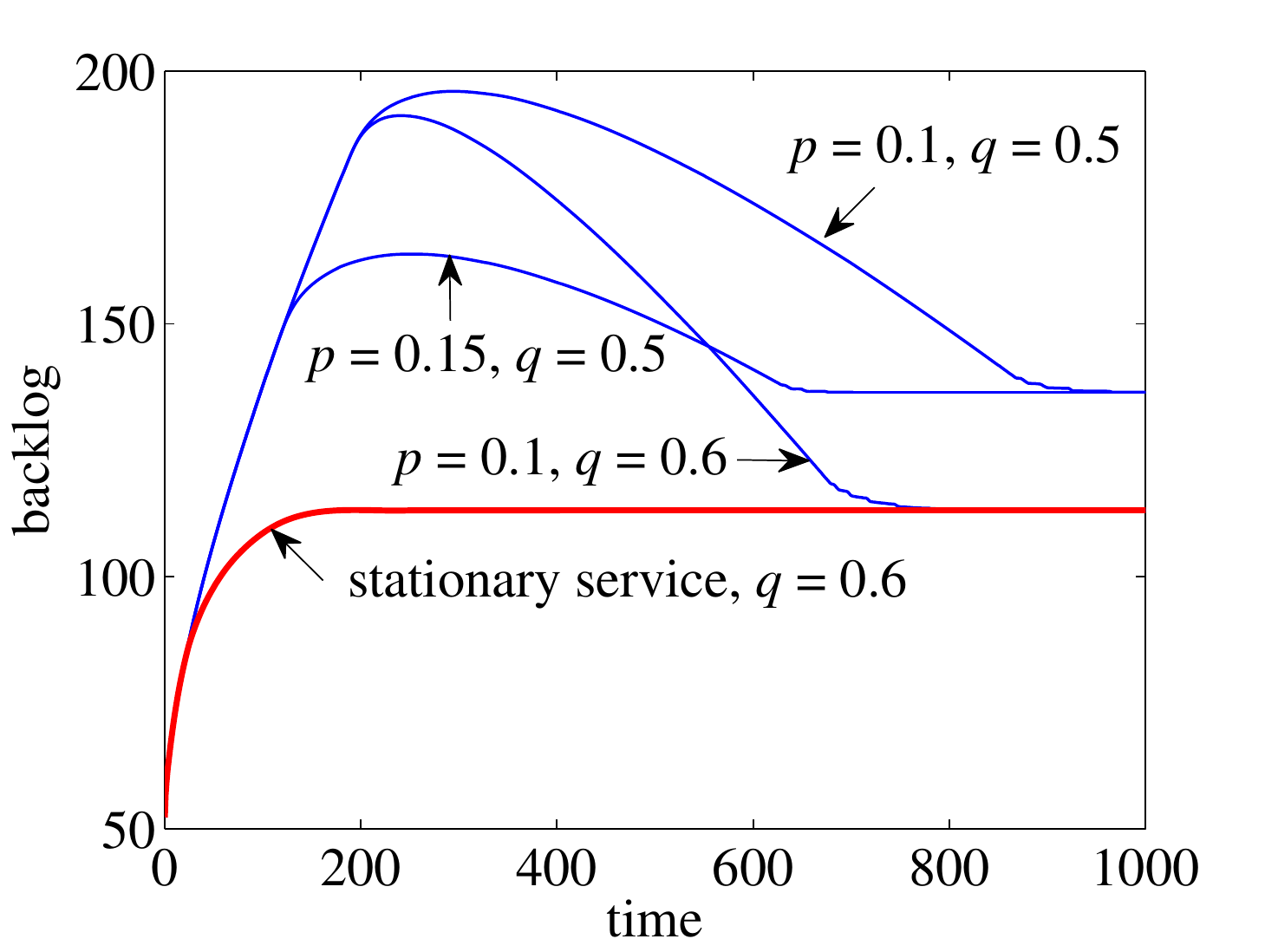}
\hspace{-10pt}
\includegraphics[width=0.51\columnwidth]{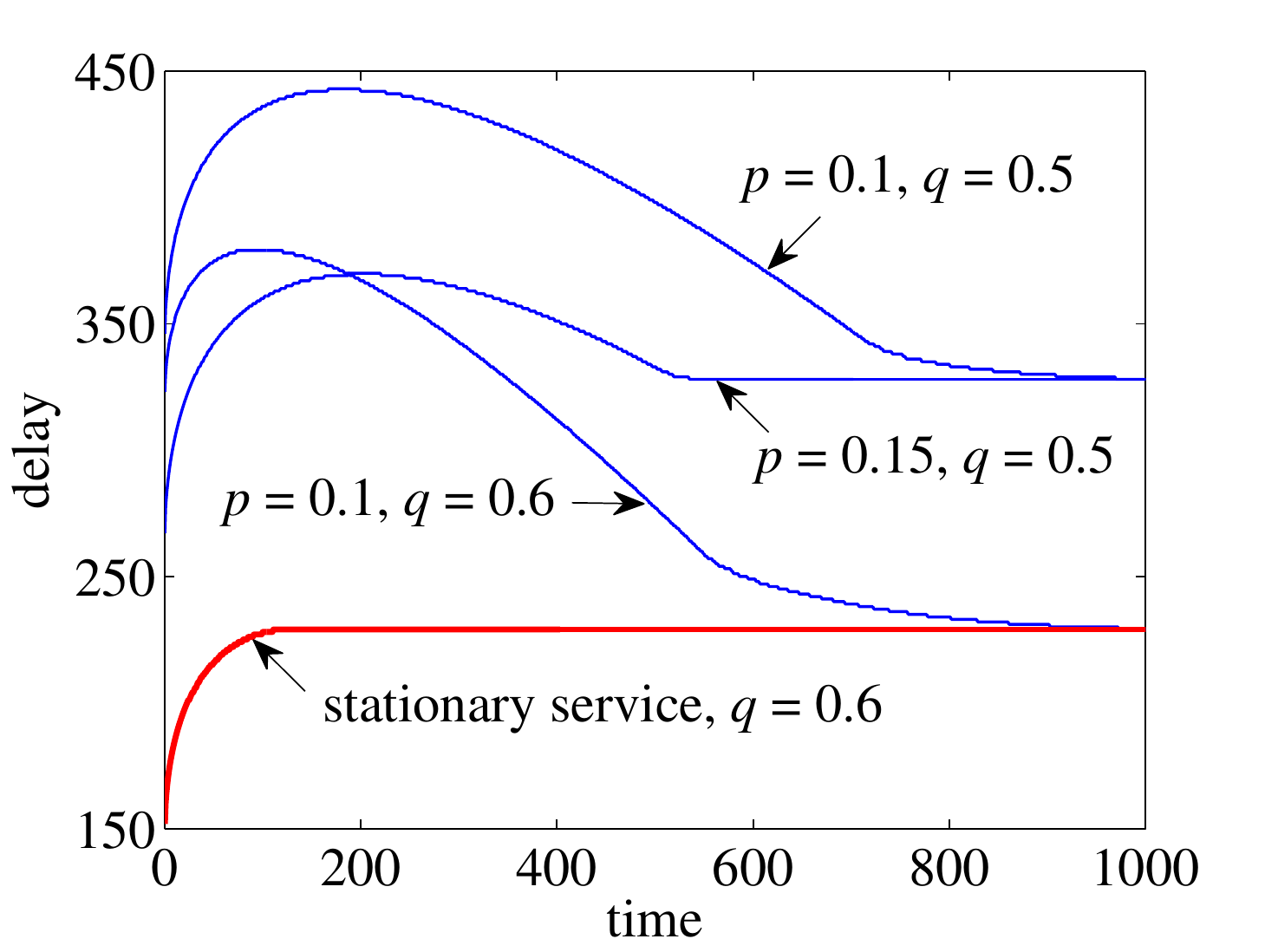}
\hspace{-10pt}
\vspace{-4pt}
\caption{Transient backlog and delay of random sleep scheduling.}
\label{fig:backlogdelay}
\vspace{-10pt}
\end{figure}
%
%------------------------------------------------------------------------
%
\section{Measurement-based Estimation}
\label{sec:estimation}
Complementary to the model-based approach, methods for estimation of service curves from measurements of systems have received increasing attention in the past years~\cite{cetinkaya:egressadmissioncontrol2, valaee:adhocadmissioncontrol, undheim:netcalcroutermeasurements, bredel:netcalcroutermeasurements, hisakado:legendre, liebeherr:availbw, rizk:identifiability, luebben:availbw2, luebben:availbwdiss}. A closely related research area~\cite{liebeherr:availbw} is available bandwidth estimation that seeks to estimate the long-term average unused capacity of a system or a network. In passive measurements~\cite{cetinkaya:egressadmissioncontrol2, liebeherr:availbw} the departures of a system are observed given the production traffic of the network, whereas in active probing~\cite{valaee:adhocadmissioncontrol, undheim:netcalcroutermeasurements, bredel:netcalcroutermeasurements, hisakado:legendre, liebeherr:availbw, rizk:identifiability, luebben:availbw2, luebben:availbwdiss} artificial test traffic is injected into the system. The different approaches can be further classified to be based either on the assumption of a work-conserving system~\cite{cetinkaya:egressadmissioncontrol2, valaee:adhocadmissioncontrol, undheim:netcalcroutermeasurements, bredel:netcalcroutermeasurements} or more general min-plus systems~\cite{hisakado:legendre, liebeherr:availbw, rizk:identifiability, luebben:availbw2,luebben:availbwdiss} and either use a deterministic model~\cite{undheim:netcalcroutermeasurements, bredel:netcalcroutermeasurements, hisakado:legendre, liebeherr:availbw, rizk:identifiability} or a stochastic model~\cite{cetinkaya:egressadmissioncontrol2, valaee:adhocadmissioncontrol, luebben:availbw2, luebben:availbwdiss}. For a more detailed comparison see also~\cite{fidler:netcalcsurvey}. While a method known as rate scanning is available for estimation of stationary service curves for the general class of min-plus linear systems with random service~\cite{luebben:availbw2, luebben:availbwdiss}, methods for estimation of transient service curves are not elaborated in the current literature.

Like~\cite{hisakado:legendre, liebeherr:availbw, rizk:identifiability, luebben:availbw2, luebben:availbwdiss}, we consider min-plus linear systems, where~\eqref{eq:serviceprocess} holds with equality, i.e., for all $t \ge 0$
\begin{equation}
D(t) = \inf_{\tau \in [0,t]} \{ A(\tau) + S(\tau,t) \} .
\label{eq:linearsystem}
\end{equation}
The unknown $S(\tau,t)$ is a time-variant, random service process as in~\cite{luebben:availbw2, luebben:availbwdiss}. Compared to~\cite{luebben:availbw2, luebben:availbwdiss} we do, however, not assume stationarity of $S(\tau,t)$. Based on~\eqref{eq:linearsystem}, the goal of service curve estimation can be phrased as an inversion problem, i.e., solving~\eqref{eq:linearsystem} for $S(\tau,t)$. Due to the infimum, the min-plus convolution has no inverse operation in general. A solution can, however, be obtained for certain arrival functions $A(t)$~\cite{liebeherr:availbw}. Using measurements of $S(\tau,t)$, we finally seek to estimate a maximal non-stationary service curve $\mathcal{S}^{\varepsilon}(\tau,t)$, i.e., find a maximal\footnote{We note that in general there is no unique maximal solution but only Pareto efficient solutions.} function $\mathcal{S}^{\varepsilon}(\tau,t)$ that satisfies~\eqref{eq:servicecurve}. The measurements are performed assuming a regenerative service process as defined in Sec.~\ref{sec:systemmodelregenerative}, where repeated network probes can observe samples of the service process $S_i(\tau,t)$ at regeneration point $P_i$ as defined by~\eqref{eq:regenerativeservicsample}.

The remainder of this section is structured as follows: We adapt two known methods, rate scanning (Sec.~\ref{sec:ratescanning}) and burst response (Sec.~\ref{sec:burstresponse}), for estimation of non-stationary service curves. Fundamental limitations of these methods are identified that are explained by the non-convexity and the super-additivity of the service (Sec.~\ref{sec:superadditivity}), respectively. We devise a new two-phase probing method (Sec.~\ref{sec:minimalprobing}) that first determines the shape of a suitable probe, that is proven to be minimal under certain conditions. The probe is used to obtain a service curve estimate with a defined accuracy.
%
%------------------------------------------------------------------------
%
\subsection{Rate Scanning}
\label{sec:ratescanning}
First, we consider the rate scanning method from~\cite{liebeherr:availbw, luebben:availbw2} that uses constant rate probes $A(t) = rt$ for a set of rates $r \in \mathbb{R}$. While~\cite{liebeherr:availbw} and~\cite{luebben:availbw2} consider deterministic and stationary service curves, respectively, we adapt the method to provide estimates of non-stationary service curves for transient phases.

For constant rate probes $A(t) = r t$ the backlog of a system follows from~\eqref{eq:linearsystem} as $B(t) = \sup_{\tau \in [0,t]} \{r (t-\tau) - S(\tau,t) \}$. Consequently, it holds that $B(t) \ge r (t-\tau) - S(\tau,t)$ for all $\tau \in [0,t]$. Solving for $S(\tau,t)$ provides the lower bound $S(\tau,t) \ge r (t-\tau) - B(t)$ for all $\tau \in [0,t]$.

Next, we use the definition of $(1-\xi)$-quantile
\begin{equation}
X^{\xi} = \inf \{x \ge 0: \mathsf{P}[X \le x] \ge 1-\xi \}
\label{eq:quantile}
\end{equation}
and denote $B^{\xi}(r,t)$ the backlog quantile at time $t$ as a function of $r$. A non-stationary service curve~\eqref{eq:servicecurve} follows as
\begin{equation}
\mathcal{S}_{rs}^{\varepsilon}(\tau,t) = \max_{r \in \mathbb{R}} \{r (t-\tau) - B^{\xi}(r,t)\}
\label{eq:ratescanestimate}
\end{equation}
for $t \ge \tau \ge 0$. By application of the union bound, $\mathcal{S}_{rs}^{\varepsilon}(\tau,t)$ satisfies~\eqref{eq:bivariateenvelope} with probability $\varepsilon = \sum_{r \in \mathbb{R}} \xi$~\cite{luebben:availbw2}. Compared to~\cite{luebben:availbw2},~\eqref{eq:ratescanestimate} uses the transient instead of the stationary backlog to estimate a non-stationary service curve. The service curve that is defined by~\eqref{eq:ratescanestimate} has the form of a Legendre-Fenchel transform of the backlog~\cite{liebeherr:availbw}. The Legendre-Fenchel transform has a number of useful properties in the network calculus~\cite{hisakado:legendre, fidler:legendre}. It is also known as convex conjugate as the result is generally a convex function~\cite{rockafellar:convexanalysis}. For convex functions, the Legendre-Fenchel transform is its own inverse, whereas the bi-conjugate of a non-convex function can only recover the convex hull of the original function~\cite{rockafellar:convexanalysis}.

In a practical implementation of rate scanning, the set of probing rates $\mathbb{R}$ has to be selected. Options are, e.g., linearly or geometrically spaced rates combined with suitable tests that determine the maximum probing rate~\cite{liebeherr:availbw,luebben:availbw2}. Then, estimates of the backlog quantile $B^{\xi}(r,t)$ for $t \ge 0$ are obtained from repeated measurements for each $r \in \mathbb{R}$. A service curve estimate $\mathcal{S}_{rs}^{\varepsilon}(\tau,t)$ follows from~\eqref{eq:ratescanestimate}.

For evaluation of the method, we consider a simulation of the random sleep scheduling model from Sec.~\eqref{sec:systemmodelrandom} with identical parameters $p=0.1$ and $q=0.5$, i.e., the stationary service rate is $0.5$. We perform rate scanning with ten rates $r \in \{0.05,0.1,\dots,0.5\}$. For each rate we obtain $10^5$ backlog samples from repeated experiments from which we extract an estimate of the backlog quantile $B^{\xi}(r,t)$ for $\xi = 10^{-4}$ so that $\varepsilon = \sum_{r\in\mathbb{R}} \xi = 10^{-3}$. Fig.~\ref{fig:ratescanning} shows the linear segments obtained by each of the ten probing rates $r \in \mathbb{R}$ and the resulting estimate $\mathcal{S}_{rs}^{\varepsilon}(\tau,t)$ for $t=200$.

As a reference, we include an analytical upper bound\footnote{For a Bernoulli service increment process with parameter $q$ that starts after a geometrically distributed time $T$, an upper bound of the service provided in $[\tau,t]$ is derived as the $(1-\varepsilon)$-quantile of a binomial distribution with parameters $t-\max\{\tau,T\}$ and $q$. We note that the first parameter, i.e., the number of trials $t-\max\{\tau,T\}$, is a random variable.} in Fig.~\ref{fig:ratescanning}. Any function that exceeds the upper bound for some $\tau \in [0,t]$ violates the definition of service envelope~\eqref{eq:bivariateenvelope}. Also, we show an analytical service curve\footnote{The analytical service curve again uses the binomial distribution but makes a sample path argument using the union bound in addition.} for comparison. Clearly, the service curve estimate from rate scanning cannot recover the non-convex parts of the analytical results as it is fundamentally limited to a convex hull by construction.
%
%------------------------------------------------------------------------
%
\subsection{Burst Response}
\label{sec:burstresponse}
Motivated by the min-plus systems theory of the network calculus~\cite{leboudec:networkcalculus}, a canonical probe for system identification is the burst function $\delta(\tau)$ that is defined as
\begin{equation}
\delta(\tau) = \begin{cases} 0 & \text{for } \tau=0, \\ \infty & \text{for } \tau > 0. \end{cases}
\label{eq:burstfunction}
\end{equation}
In min-plus algebra, the burst function takes the role of the Dirac delta function, i.e., it is the neutral element of min-plus convolution. Consequently, sending a burst probe $A(\tau) = \delta(\tau)$ reveals the service $S(0,t)$ for all $t \ge 0$ as the burst response of the system
\begin{equation}
D(t) = \inf_{\tau \in [0,t]} \{ \delta(\tau) + S(\tau,t) \} = S(0,t) .
\label{eq:burstresponse}
\end{equation}
For additive service processes as defined in~\cite[p.~6]{jiang:stochasticnetworkcalculus}, $S(\tau,t)$ can be obtained for all $t \ge \tau \ge 0$ as
\begin{equation}
S(\tau,t) = S(0,t) - S(0,\tau) .
\label{eq:additivity}
\end{equation}

For a stochastic analysis, we denote $\Omega$ the set of all feasible sample paths $D_{\omega}(t)$. For each $\omega \in \Omega$ we use the additivity~\eqref{eq:additivity} to obtain the service process from the burst response~\eqref{eq:burstresponse} as
\begin{equation*}
S_{\omega}(\tau,t) = D_{\omega}(t) - D_{\omega}(\tau)
\end{equation*}
for all $\tau \in [0,t]$. We fix $t > 0$ and select a subset of the sample paths $\Psi_t \subseteq \Omega$ with probability $\mathsf{P}[\Psi_t] \ge 1-\varepsilon$. By definition of
\begin{equation}
\mathcal{S}_{br}^{\varepsilon}(\tau,t) = \inf_{\psi \in \Psi_t} \{S_{\psi}(\tau,t)\} ,
\label{eq:effectiveserviceburstrespones}
\end{equation}
it holds that $S_{\psi}(\tau,t) \ge \mathcal{S}_{br}^{\varepsilon}(\tau,t)$ for all $\tau \in [0,t]$ and all $\psi \in \Psi_t$. Further, we have $\mathsf{P}[\Psi_t] \ge 1-\varepsilon$ so that $\mathcal{S}_{br}^{\varepsilon}(\tau,t)$ satisfies~\eqref{eq:bivariateenvelope}. Hence, $\mathcal{S}_{br}^{\varepsilon}(\tau,t)$ is a non-stationary service curve that conforms to~\eqref{eq:servicecurve}.

In a practical probing scheme, we can only observe a finite set of sample paths $\Omega$ from repeated measurements of the departures $D_{\omega}(t)$ for $t \ge 0$. We fix $t > 0$ and estimate the service $S_{\omega}(\tau,t) = D_{\omega}(t) - D_{\omega}(\tau)$ for all $\tau \in [0,t]$ and $\omega \in \Omega$. To construct $\Psi_t$ we remove a set of minimal sample paths from $\Omega$. We define the minimal sample path $\phi$ to be the sample path that attains the minimum
\begin{equation*}
S_{\min}(\tau,t) = \min_{\omega \in \Omega} \{S_{\omega}(\tau,t)\}
\vspace{-1pt}
\end{equation*}
for $\tau \in [0,t]$ most frequently. That is $\phi = \arg\max_{\omega \in \Omega} \{X_{\omega} \}$, where for all $\omega \in \Omega$
\begin{equation*}
X_{\omega} = \sum_{\tau=0}^{t-1} 1_{S_{\omega}(\tau,t) = S_{\min}(\tau,t)} .
\vspace{-1pt}
\end{equation*}
The indicator function $1_{(\cdot)}$ is one if the argument is true and zero otherwise. We remove $\phi$ to obtain $\Psi_t = \Omega \backslash \phi$ and repeat the above steps for $\Psi_t$ as long as $\mathsf{P}[\Psi_t] \ge 1-\varepsilon$. Finally, we obtain the service curve estimate from~\eqref{eq:effectiveserviceburstrespones} for all $\tau \in [0,t]$.

\begin{figure}
\hspace{-10pt}
\subfigure[Rate scanning]{
\includegraphics[width=0.51\columnwidth]{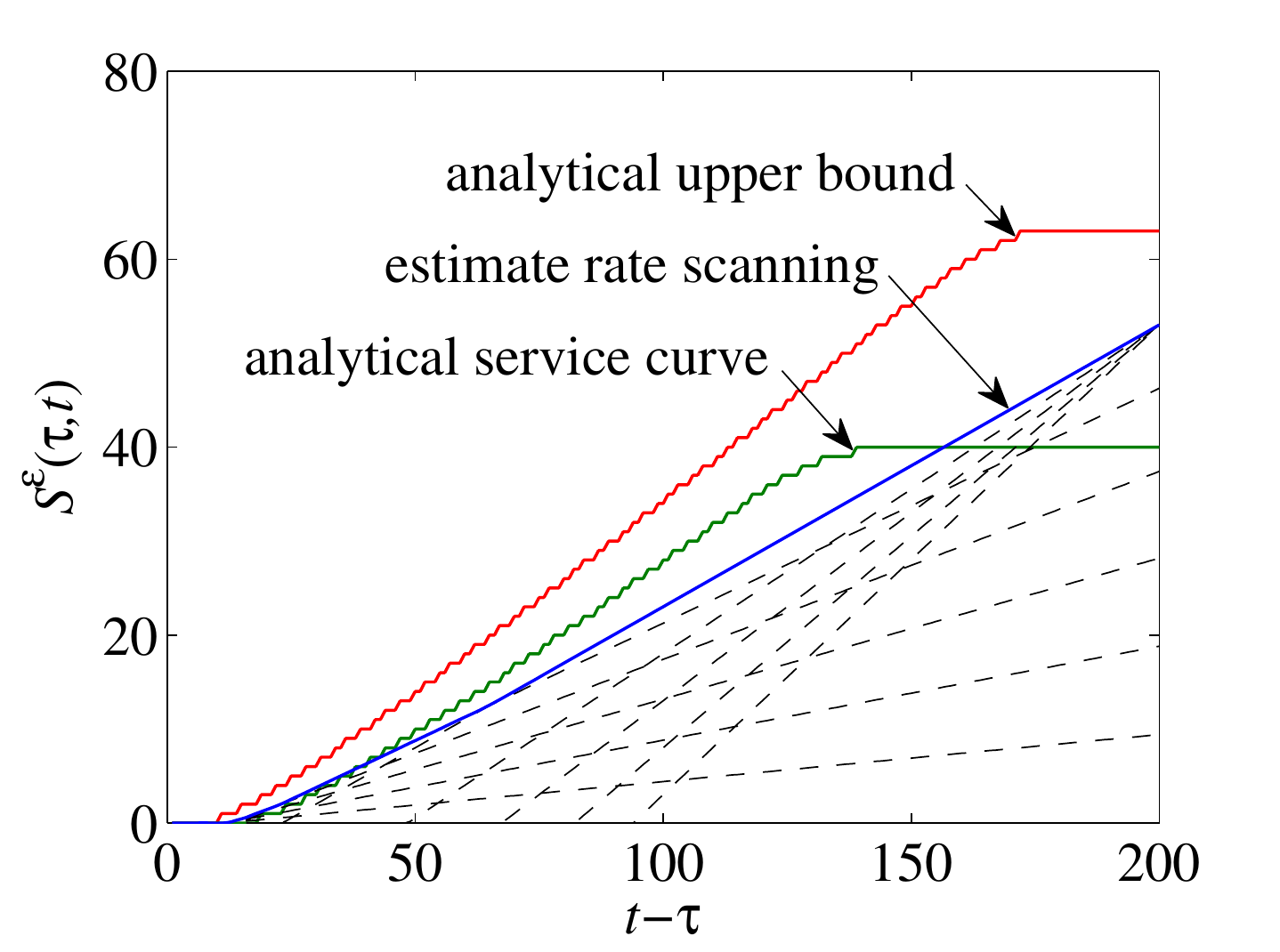}
\label{fig:ratescanning}
}
\hspace{-10pt}
\subfigure[Burst response]{
\includegraphics[width=0.51\columnwidth]{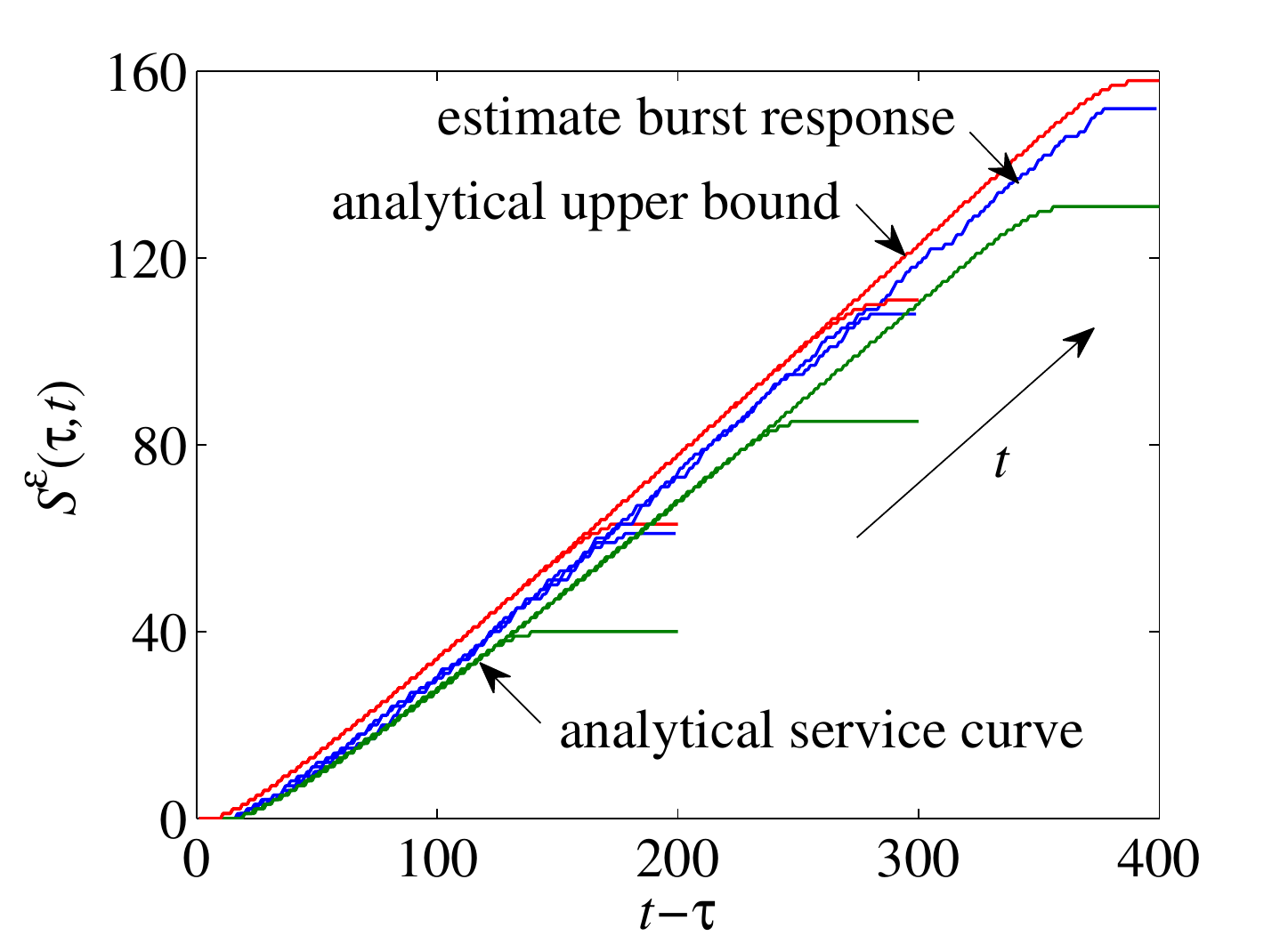}
\label{fig:burstprobing}
\hspace{-10pt}
}
\vspace{-11pt}
\caption{Service curve estimates compared to analytical results. The estimate from rate scanning is the maximum of linear rate segments (dashed lines). By construction it can only recover a convex hull. Burst probing can estimate non-convex service curves and performs close to the analytical upper bound.}
\label{fig:servicecurveestimates}
\vspace{-10pt}
\end{figure}
Fig.~\ref{fig:burstprobing} shows service curve estimates from burst probing for $t=200$, $300$, and $400$ together with analytical upper bounds and analytical reference service curves for the same system as used for Fig.~\ref{fig:ratescanning}. We observe that burst probing performs very well as it provides service curve estimates that closely follow the analytical upper bound.

Despite its good performance, burst probing has, however, limitations. Particularly, we discover that the intuitive assumption of additive service processes~\cite{jiang:stochasticnetworkcalculus}, that is a basis of the method, is not justifiable in general. A notable exception are networks of systems, where the end-to-end network service process $S^{net}(\tau,t) = S^1 \otimes S^2 \otimes \dots \otimes S^n(\tau,t)$ is derived by min-plus convolution of the service processes of the individual systems $S^i(\tau,t)$ for $i = 1,2,\dots n$. For additive $S^i$, the following Lemma~\ref{lem:additivity} proves that $S^{net}$ is super-additive in general and additive only in certain special cases. The result is significant since it refutes the assumption of additive service processes for the large class of tandem systems. It implies that the burst probing method is too optimistic as it estimates $S^{net} (\tau,t) = S^{net}(0,t) - S^{net}(0,\tau)$ from~\eqref{eq:additivity} whereas in general only $S^{net}(\tau,t) \le S^{net}(0,t)-S^{net}(0,\tau)$ holds due to the super-additivity of $S^{net}$. Besides, burst probes can be considered intrusive as they cause non-linear behavior of certain systems~\cite{liebeherr:availbw}. An important example are FIFO multiplexers, where a burst probe can preempt other traffic, resulting in a too optimistic service estimate.
%
%------------------------------------------------------------------------
%
\subsection{Super-additivity of $\min$ and $\otimes$}
\label{sec:superadditivity}
The burst response method in Sec.~\ref{sec:burstresponse} is applicable to additive service processes. The following lemmas show that the $\min$ and $\otimes$ operators in general maintain only super-additivity but not additivity. A function $f(s,t)$ is super-additive if $f(s,u) \ge f(s,t) + f(t,u)$ for all $u \ge t \ge s \ge 0$. Trivially, an additive function $f(s,u) = f(s,t) + f(t,u)$ for all $u \ge t \ge s \ge 0$ is also super-additive.
\begin{lem}[Super-additivity of $\min$]
\label{lem:additivitymin}
Given two super-additive bivariate functions $f(s,t)$ and $g(s,t)$ for $t \ge s \ge 0$. The minimum $h(s,t) = \min \{f(s,t), g(s,t)\}$ is super-additive.
\end{lem}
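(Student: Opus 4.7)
The plan is to argue directly from the definition of super-additivity by a two-case analysis on which of $f(s,u)$ or $g(s,u)$ realizes the minimum at the outermost interval $[s,u]$. Fix arbitrary $u \ge t \ge s \ge 0$; the goal is to establish $h(s,u) \ge h(s,t) + h(t,u)$.

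First I would observe the trivial pointwise inequalities $h(s,t) \le f(s,t)$, $h(s,t) \le g(s,t)$, and likewise for $(t,u)$, which follow immediately from $h = \min\{f,g\}$. These are the only facts I will need about $h$ on the right-hand side of the target inequality, and they let me replace the tricky $h$-summands by either $f$- or $g$-summands as convenient.

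Next I would split on the value of $h(s,u)$. If $h(s,u) = f(s,u)$, then super-additivity of $f$ gives $h(s,u) = f(s,u) \ge f(s,t) + f(t,u) \ge h(s,t) + h(t,u)$, where the last step uses the trivial inequalities above. The case $h(s,u) = g(s,u)$ is symmetric, using super-additivity of $g$. In either case the required bound holds, so $h$ is super-additive.

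I do not expect any real obstacle; the only thing to be careful about is not to try to compare $f(s,t)$ with $g(t,u)$ or vice versa, since mixing the two functions across the split interval is exactly what super-additivity of $\min$ would \emph{not} respect. By keeping the chosen function consistent across $(s,t)$ and $(t,u)$ and only downgrading to $h$ at the very end, the argument goes through cleanly in two lines per case.
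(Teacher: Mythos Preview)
Your proof is correct and is essentially the same argument as the paper's, just presented via a case split rather than by expanding the sum of minima algebraically. The paper writes $h(s,t)+h(t,u)$ as a minimum over the four cross sums and then bounds the two ``pure'' terms $f(s,t)+f(t,u)$ and $g(s,t)+g(t,u)$ by $f(s,u)$ and $g(s,u)$ using super-additivity; your case analysis on which of $f(s,u)$, $g(s,u)$ realizes $h(s,u)$ isolates exactly one of those pure chains and is arguably cleaner.
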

We note that the minimum of two additive bivariate functions $f(s,t)$ and $g(s,t)$ for $t \ge s \ge 0$ is super-additive, as a special case of Lemma~\ref{lem:additivitymin}, but in general not additive. A counterexample is $f(s,t) = t-s$ and $g(s,t) = 2(\lfloor t/2 \rfloor - \lfloor s/2 \rfloor)$. Clearly $f$ and $g$ are additive, however, $h = \min \{f,g\}$ is not.
\begin{proof}[Proof of Lemma~\ref{lem:additivitymin}]
By definition of $h$ we have
\begin{align*}
&h(s,t) + h(t,u) \\
&= \min \{f(s,t), g(s,t)\} + \min \{f(t,u), g(t,u)\} \\
&\le \min \{f(s,u), g(s,u), f(s,t)+g(t,u), g(s,t)+f(t,u)\} \\
&\le h(s,u).
\end{align*}
In the second line we used the super-additivity of $f$ and $g$ and in the third line $\min\{f(s,u),g(s,u)\}=h(s,u)$ and $\min\{h(s,u),x\} \le h(s,u)$ for any $x$.
\end{proof}
\begin{lem}[Super-additivity of $\otimes$]
\label{lem:additivity}
Given two bivariate functions $f(s,t)$ and $g(s,t)$ for $t \ge s \ge 0$ where $f(t,t), g(t,t) = 0$ for all $t \ge 0$. Define $h(s,t) = f \otimes g(s,t)$.
\begin{enumerate}
\renewcommand{\theenumi}{\roman{enumi}}
\item If $f$ and $g$ are super-additive, then $h$ is super-additive.
\item If $f$ and $g$ are additive and univariate, then $h$ is additive.
\end{enumerate}
\end{lem}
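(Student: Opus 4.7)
The plan is to handle the two claims separately, using the super-additivity of the individual processes in (i) and an auxiliary matching upper bound for the equality in (ii).

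For part (i), I would write $h(s,u) = \inf_{\tau \in [s,u]} \{f(s,\tau) + g(\tau,u)\}$ and split the infimum into the regimes $\tau \in [s,t]$ and $\tau \in [t,u]$. In the first regime, super-additivity of $g$ gives $g(\tau,u) \ge g(\tau,t) + g(t,u)$, so $f(s,\tau) + g(\tau,u) \ge [f(s,\tau) + g(\tau,t)] + g(t,u) \ge h(s,t) + g(t,u)$. In the second regime, super-additivity of $f$ gives $f(s,\tau) \ge f(s,t) + f(t,\tau)$, so $f(s,\tau) + g(\tau,u) \ge f(s,t) + [f(t,\tau) + g(\tau,u)] \ge f(s,t) + h(t,u)$. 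The key remaining observation is that the leftover boundary terms are themselves upper bounds on the opposite $h$-value: plugging $\tau = t$ into the convolutions yields $h(t,u) \le f(t,t) + g(t,u) = g(t,u)$ and $h(s,t) \le f(s,t) + g(t,t) = f(s,t)$, thanks to the normalization $f(t,t) = g(t,t) = 0$ assumed in the hypothesis. Both regimes therefore give $f(s,\tau) + g(\tau,u) \ge h(s,t) + h(t,u)$, and taking the infimum over $\tau \in [s,u]$ concludes super-additivity of $h$.

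For part (ii), Lemma~\ref{lem:additivity}(i) applies at once (additive is in particular super-additive), so the direction $h(s,u) \ge h(s,t) + h(t,u)$ is inherited. The reverse inequality is where the univariate hypothesis becomes essential. I would write $f(s,t) = \tilde f(t-s)$, $g(s,t) = \tilde g(t-s)$, and consequently $h(s,t) = \tilde h(t-s)$ with $\tilde h(y) = \inf_{x \in [0,y]} \{\tilde f(x) + \tilde g(y-x)\}$. Univariate additivity reduces $\tilde f$ and $\tilde g$ to Cauchy-type functions, i.e.\ $\tilde f(x_1 + x_2) = \tilde f(x_1) + \tilde f(x_2)$ and likewise for $\tilde g$. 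Given $a,b \ge 0$ and an arbitrary tolerance $\eta > 0$, pick $x_a \in [0,a]$ and $x_b \in [0,b]$ with $\tilde f(x_a) + \tilde g(a-x_a) \le \tilde h(a) + \eta$ and $\tilde f(x_b) + \tilde g(b-x_b) \le \tilde h(b) + \eta$. The point $x := x_a + x_b$ lies in $[0, a+b]$, so it is a feasible candidate for the convolution defining $\tilde h(a+b)$, and the additivity of $\tilde f$ and $\tilde g$ lets me split $\tilde f(x) = \tilde f(x_a) + \tilde f(x_b)$ and $\tilde g((a-x_a)+(b-x_b)) = \tilde g(a-x_a) + \tilde g(b-x_b)$. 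Collecting terms gives $\tilde h(a+b) \le \tilde h(a) + \tilde h(b) + 2\eta$, and letting $\eta \downarrow 0$ closes the gap.

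The main obstacle is conceptual rather than computational and lies in part (ii): the argument hinges on gluing a near-optimal split point $x_a \in [0,a]$ with a near-optimal split point $x_b \in [0,b]$ into a single feasible split point $x_a + x_b \in [0,a+b]$ whose contribution decomposes additively. This gluing step is exactly where the univariate assumption is indispensable; in the genuinely bivariate additive case, near-optimal minimizers $\sigma_1 \in [s,t]$ and $\sigma_2 \in [t,u]$ cannot be merged into a single $\tau \in [s,u]$ in a way that preserves the decomposition of $f(s,\cdot) + g(\cdot,u)$, which is why the lemma restricts (ii) to univariate functions.
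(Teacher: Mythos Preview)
Your argument is correct in both parts, but it is organized differently from the paper's proof.

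For (i), the paper works from the other direction: it starts from $h(s,t)+h(t,u)$ expressed as a double infimum over $\tau\in[s,t]$ and $\upsilon\in[t,u]$, uses super-additivity of $f$ (resp.\ $g$) to collapse it to $\inf_{\upsilon\in[t,u]}\{f(s,\upsilon)+g(\upsilon,u)\}$ (resp.\ $\inf_{\tau\in[s,t]}\{f(s,\tau)+g(\tau,u)\}$), and then combines these two partial infima into $h(s,u)$. You instead start from the single infimum defining $h(s,u)$, split its domain at $t$, and bound each branch below by $h(s,t)+h(t,u)$, using $h(t,u)\le g(t,u)$ and $h(s,t)\le f(s,t)$ from the normalization $f(t,t)=g(t,t)=0$. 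The ingredients are identical; your top-down organization is a bit more direct and avoids manipulating the product of two infima.

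For (ii), the paper proves equality in one shot: it writes $h(t-s)+h(u-t)$ as a double infimum, uses additivity to make the integrand depend only on $\varsigma=\tau+\upsilon$, and observes that $\varsigma$ sweeps $[s+t,t+u]$, which after a shift is exactly the convolution defining $h(u-s)$. You instead reuse (i) for the inequality $\tilde h(a+b)\ge\tilde h(a)+\tilde h(b)$ and obtain the reverse by gluing $\eta$-optimal split points $x_a,x_b$ into $x_a+x_b$. Your route is slightly longer but has the minor advantage that it never assumes the infima are attained, and it makes explicit why the univariate hypothesis is needed (the merged point $x_a+x_b$ must land in the feasible interval and the integrand must decompose additively there).
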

We note that the convolution of two additive bivariate functions $f(s,t)$ and $g(s,t)$ for $t \ge s \ge 0$ is super-additive, but in general not additive. As a counterexample consider the additive functions $f(s,t) = t-s$ and $g(t,u) = 2(\lfloor u/2 \rfloor - \lfloor t/2 \rfloor)$ for which $h(s,u) = f \otimes g(s,u) = 2 \lfloor u/2 \rfloor - s$ is not additive. To quantify this effect, we define the maximal deviation of a super-additive function $h(s,u)$ from additivity as
\begin{equation}
\Delta(s,u) := h(s,u) - \inf_{t \in [s,u]} \{h(s,t) + h(t,u)\} .
\label{eq:deviationfromadditivity}
\end{equation}
Additivity is, however, achieved in case of univariate functions, where Lemma~\ref{lem:additivity} ii) extends a result for min-plus convolution of sub-additive univariate functions~\cite[p. 142]{leboudec:networkcalculus}.

In Fig.~\ref{fig:superadditivity}, we evaluate the network service process $S^{net}(\tau,t) = S^1 \otimes \dots \otimes S^n(\tau,t)$ for a tandem of $n=2,3,4$ systems with random sleep scheduling as in~Sec.~\ref{sec:systemmodelrandom}, i.e., the service processes $S^i(\tau,t)$ are additive. We consider $10^5$ sample paths and show the distribution of the relative deviation of $S^{net}(0,400)$ from additivity, i.e., $\Delta(\tau,t)/S^{net}(\tau,t)$. Fig.~\ref{fig:superadditivity} confirms significant deviations from additivity.
\begin{proof}[Proof of Lemma~\ref{lem:additivity}]
By definition of $h$ we have
\begin{align}
&h(s,t) + h(t,u) = f \otimes g (s,t) + f \otimes g (t,u) \nonumber \\
&= \inf_{\tau \in [s,t]} \inf_{\upsilon \in [t,u]} \{ f(s,\tau) + f(t,\upsilon) + g(\tau,t) + g(\upsilon,u) \}.
\label{eq:doubleinfadditivity}
\end{align}

i) Given $f$ and $g$ are super-additive. From~\eqref{eq:doubleinfadditivity} we have
\begin{align}
&h(s,t) + h(t,u) \nonumber \\
&\le \inf_{\tau \in [s,t]} \inf_{\upsilon \in [t,u]} \{ f(s,\upsilon) - f(\tau,t) + g(\tau,t) + g(\upsilon,u) \} \nonumber \\
&= \inf_{\upsilon \in [t,u]} \{ f(s,\upsilon) + g(\upsilon,u) \} + \inf_{\tau \in [s,t]} \{ g(\tau,t) - f(\tau,t) \} \nonumber \\
&\le \inf_{\upsilon \in [t,u]} \{ f(s,\upsilon) + g(\upsilon,u) \}.
\label{eq:subadditivitypart1}
\end{align}
In the first line, we estimated $f(s,\tau) + f(\tau,t) + f(t,\upsilon) \le f(s,\upsilon)$ due to the super-additivity of $f$. In the second line, we rearranged the infima, and in the third line, we estimated $\inf_{\tau \in [s,t]} \{ g(\tau,t) - f(\tau,t) \} \le g(t,t) - f(t,t) = 0$ since $f(t,t),g(t,t)=0$ for all $t \ge 0$. Similarly, using the super-additivity of $g$ we derive from~\eqref{eq:doubleinfadditivity} that
\begin{equation}
h(s,t) + h(t,u) \le \inf_{\tau \in [s,t]} \{ f(s,\tau) + g(\tau,u) \} .
\label{eq:subadditivitypart2}
\end{equation}
Combining~\eqref{eq:subadditivitypart1} and~\eqref{eq:subadditivitypart2} we obtain
\begin{equation*}
h(s,t) + h(t,u) \le \inf_{\tau \in [s,u]} \{ f(s,\tau) + g(\tau,u) \} = h(s,u) ,
\end{equation*}
which proves the super-additivity of $h$.

ii) For the special case of univariate functions $f(s,t) = f(t-s)$ and $g(s,t) = g(t-s)$, that depend only on the difference $t-s$ and not on the absolute values of $s$ and $t$, it follows that $h(s,t) = f \otimes g(t-s) = h(t-s)$ is also univariate. Using the additivity of $f$ and $g$,~\eqref{eq:doubleinfadditivity} yields that
\begin{align*}
& h(t-s) + h(u-t) \\
&= \inf_{\tau \in [s,t]} \inf_{\upsilon \in [t,u]} \{ f(\tau-s+\upsilon-t) + g(t-\tau+u-\upsilon) \} \\
&= \inf_{\varsigma \in [s+t,t+u]} \{ f(\varsigma-s-t) + g(t+u-\varsigma) \} \\
&= \inf_{\varsigma \in [0,u-s]} \{ f(\varsigma) + g(u-s-\varsigma)\} = h(u-s)
\end{align*}
which proves the additivity of $h$.
\end{proof}
\begin{figure}
\hspace{-10pt}
\subfigure[Deviation from additivity]{
\includegraphics[width=0.51\columnwidth]{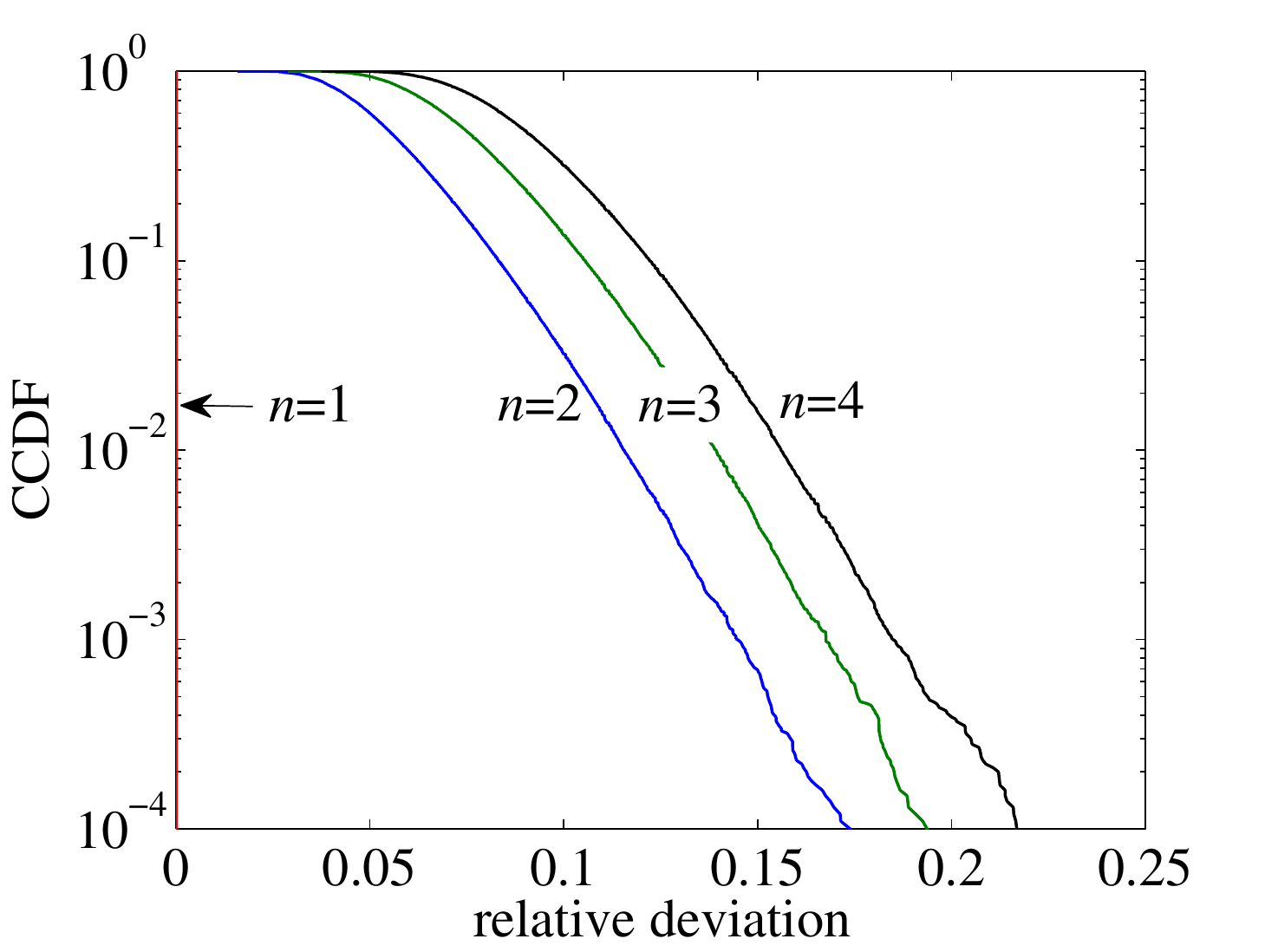}
\label{fig:superadditivity}
}
\hspace{-10pt}
\subfigure[Accuracy of minimal probing]{
\includegraphics[width=0.51\columnwidth]{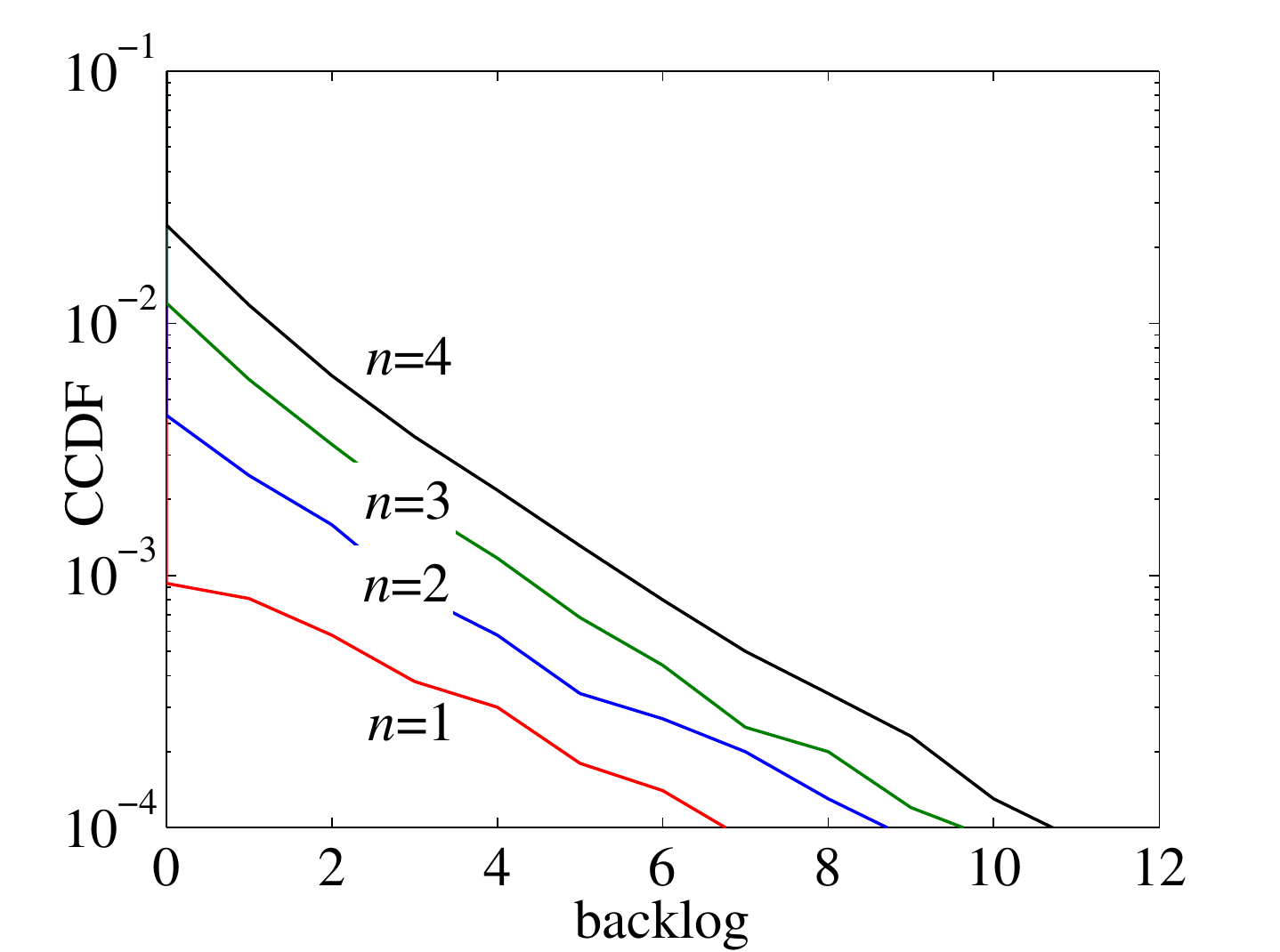}
\label{fig:minimalprobing}
\hspace{-10pt}
}
\vspace{-10pt}
\caption{Network of $n$ systems with random sleep scheduling in series. (a) The network service process deviates from additivity. (b) Minimal probing achieves small backlogs, corresponding to a high accuracy of the estimate.}
\label{fig:deviationsfromadditivity}
\vspace{-11pt}
\end{figure}
%
%------------------------------------------------------------------------
%
\subsection{Minimal Probing}
\label{sec:minimalprobing}
Next, we devise a new probing method to characterize the transient behavior of service curves that are neither convex nor additive. The method comprises two phases. In the first step, the method estimates a minimal probe as well as an upper bound of the service curve from burst probing. In the second step, the minimal probe is used to estimate a non-stationary service curve with a defined accuracy. We show that the minimal probe can reveal the service of the system, ideally without causing backlogs. The importance of the probe traffic intensity is also discussed in~\cite{liebeherr:availbw} and estimation methods for stationary systems that use minimal backlogging are~\cite{valaee:adhocadmissioncontrol, nam:minimalbackloggingbwest}.
\subsubsection{Estimation using Arbitrary Probes}
First, we consider how to obtain a service curve estimate from an arbitrary probe. The estimate will then be used to derive conditions for the shape of a minimal probe. From~\eqref{eq:linearsystem} it follows that $D(t) \le A(\tau) + S(\tau,t)$ for all $\tau \in [0,t]$ so that
\begin{equation}
S(\tau,t) \ge D(t) - A(\tau),
\label{eq:estimategeneral}
\vspace{-1pt}
\end{equation}
for all $\tau \in [0,t]$. An equivalent expression using the backlog is $S(\tau,t) \ge A(\tau,t)-B(t)$ for all $\tau \in [0,t]$. By insertion of the backlog quantile it follows that
\begin{equation}
\mathcal{S}^{\varepsilon}(\tau,t) = A(\tau,t) - B^{\varepsilon}(t)
\label{eq:estimateminimalprobing}
\vspace{-1pt}
\end{equation}
satisfies~\eqref{eq:bivariateenvelope}, i.e., $\mathcal{S}^{\varepsilon}(\tau,t)$ is a non-stationary service curve~\eqref{eq:servicecurve}. For the special case of $A(\tau,t) = r (t-\tau)$ the rate scanning method is recovered.
\subsubsection{Definition of Minimal Probes}
So far, we did not constrain the shape of the probes and defining a suitable probe is non-trivial. Intuitively, a too small probe will provide little information about the service as the observed departures are mostly limited by the arrivals~\cite{liebeherr:availbw}. A too large probe, on the other hand, will deteriorate the estimate, e.g., in the extreme case of a burst probe $A(\tau) = \delta(\tau)$, the lower bound in~\eqref{eq:estimategeneral} will only be useful for $\tau=0$ but not for $\tau > 0$. The same restriction applied to burst probing earlier. Consequently, we seek to find the minimal probe $A_{mp}(\tau)$ that satisfies $D(t) = \inf_{\tau \in [0,t]} \{A_{mp}(\tau) + S(\tau,t) \} = S(0,t)$ for a fixed $t > 0$, i.e., the minimal probe that allows estimating the service from observations of the departures. The minimal probe is
\begin{equation}
A_{mp}(\tau) = S(0,t) - S(\tau,t)
\label{eq:minimalprobe}
\vspace{-1pt}
\end{equation}
since any function $A(\tau)$ for which $\exists \tau \in [0,t]: A(\tau) < A_{mp}(\tau)$ will only reveal $D(t) < S(0,t)$. We note that $A_{mp}(\tau)$ maximizes the right hand side of~\eqref{eq:estimategeneral}, where~\eqref{eq:linearsystem} confirms that $D(t)-A_{mp}(\tau) = S(\tau,t)$. For any other larger or smaller probe $A(\tau) = A_{mp}(\tau) + f(\tau)$ it holds that $D(t)-A(\tau) = S(\tau,t) + \inf_{\upsilon \in [0,t]} \{f(\upsilon) \} - f(\tau) \le S(\tau,t)$.

Since $A_{mp}(\tau)$ depends on the unknown service, it cannot be constructed a priori. To gather information that facilitates estimating $A_{mp}(\tau)$, we initially consider the system's burst response $D(t) = S(0,t)$ for $A(t) = \delta(t)$. An estimate of~\eqref{eq:minimalprobe} is $\widetilde{A}_{mp}(\tau) = S(0,\tau)$ \mbox{that is exact if $S(\tau,t)$ is} additive. In the stochastic case, we use $\mathcal{S}_{br}^{\varepsilon}(\tau,t)$ from~\eqref{eq:effectiveserviceburstrespones} to estimate
\begin{equation}
\widetilde{A}_{mp}(\tau) = \mathcal{S}_{br}^{\varepsilon}(0,t) - \mathcal{S}_{br}^{\varepsilon}(\tau,t).
\label{eq:minimalprobestochastic}
\end{equation}
We note that additivity of $\mathcal{S}_{br}^{\varepsilon}(\tau,t)$ cannot be assumed by construction of~\eqref{eq:effectiveserviceburstrespones}, see Lemma~\ref{lem:additivitymin}.
\subsubsection{Accuracy of the Estimates}
We start the investigation for a time-variant but otherwise deterministic system, where~\eqref{eq:estimategeneral} provides a lower estimate of the service. Given the probe $\widetilde{A}_{mp}(\tau) = S(0,\tau)$, where $S(0,\tau)$ is obtained as the system's burst response. By insertion into~\eqref{eq:estimategeneral}, a lower estimate is
\begin{equation*}
S(\tau,t) \ge \inf_{\upsilon \in [0,t]} \{S(0,\upsilon) + S(\upsilon,t)\} - S(0,\tau),
\end{equation*}
where we used~\eqref{eq:linearsystem} to compute $D(t)$ for $\widetilde{A}_{mp}(\tau)$. On the other hand, the burst response provides the upper estimate $S(\tau,t) \le S(0,t) - S(0,\tau)$ if $S(\tau,t)$ is super-additive. Taking the difference of the upper and the lower estimate, the service process is bounded in an interval of width
\begin{equation*}
\Delta(t) = S(0,t) -  \inf_{\upsilon \in [0,t]} \{S(0,\upsilon) + S(\upsilon,t)\}
\end{equation*}
that is the maximum deviation of $S(0,t)$ from additivity as defined in~\eqref{eq:deviationfromadditivity}. Further, using $\widetilde{A}_{mp}(t) = S(0,t)$ and~\eqref{eq:linearsystem},
\begin{equation*}
\Delta(t) = \widetilde{A}_{mp}(t) - D(t) = B(t),
\end{equation*}
i.e., the estimation accuracy can be directly seen from the backlog at the end of the probe. Consequently, if $S(\tau,t)$ is additive, $\Delta(t)=0$, i.e., both the lower and the upper estimate recover $S(\tau,t)$ exactly and the backlog $B(\tau)$ that is caused by $\widetilde{A}_{mp}(\tau)$ is zero during the entire probe, i.e., for all $\tau \in [0,t]$.

In the stochastic case, we obtain a non-stationary service curve estimate from~\eqref{eq:estimateminimalprobing} as $\mathcal{S}_{mp}^{\varepsilon}(\tau,t) = \widetilde{A}_{mp}(\tau,t) - B^{\varepsilon}(t)$ using the probe defined in~\eqref{eq:minimalprobestochastic}. By use of~\eqref{eq:minimalprobestochastic} we have $\widetilde{A}_{mp}(\tau,t) = \widetilde{A}_{mp}(t) - \widetilde{A}_{mp}(\tau) = \mathcal{S}_{br}^{\varepsilon}(\tau,t)$ since $\mathcal{S}_{br}^{\varepsilon}(t,t) = 0$ by definition. Note that we do not assume additivity of $S(\tau,t)$, so far. By insertion into~\eqref{eq:estimateminimalprobing}, it holds that
\begin{equation}
\mathcal{S}_{mp}^{\varepsilon}(\tau,t) = \mathcal{S}_{br}^{\varepsilon}(\tau,t) - B^{\varepsilon}(t).
\label{eq:estimateaccuracy}
\end{equation}
We conclude that $B^{\varepsilon}(t)$ observed by minimal probing is a measure of accuracy that separates the conservative estimate of minimal probing from the possibly too optimistic estimate of burst probing.

To investigate $B^{\varepsilon}(t)$, consider the backlog expression $B(t) = \sup_{\tau \in [0,t]} \{ A(\tau,t) - S(\tau,t) \}$ that is derived from~\eqref{eq:linearsystem}. By insertion of~\eqref{eq:minimalprobestochastic} we have $B(t) = \sup_{\tau \in [0,t]} \{ \mathcal{S}_{br}^{\varepsilon}(\tau,t) - S(\tau,t) \}$. By definition~\eqref{eq:effectiveserviceburstrespones}, $\mathcal{S}_{br}^{\varepsilon}(\tau,t) = \inf_{\psi \in \Psi_t} \{S_{\psi}(0,t) - S_{\psi}(0,\tau)\}$ so that for any sample path $\varphi \in \Psi_t$
\begin{align*}
B_{\varphi}(t) &= \sup_{\tau \in [0,t]} \Bigl\{ \inf_{\psi \in \Psi_t} \{S_{\psi}(0,t) - S_{\psi}(0,\tau)\} - S_{\varphi}(\tau,t) \Bigr\} \\
&\le \sup_{\tau \in [0,t]} \{ S_{\varphi}(0,t) - S_{\varphi}(0,\tau) - S_{\varphi}(\tau,t) \} \\
&= S_{\varphi}(0,t) - \inf_{\tau \in [0,t]} \{S_{\varphi}(0,\tau) + S_{\varphi}(\tau,t) \},
\end{align*}
i.e., $B_{\varphi}(t)$ is bounded by the maximal deviation of $S_{\varphi}(0,t)$ from additivity. Finally, if $S(\tau,t)$ is additive, it follows that $B_{\varphi}(t) = 0$ for all $\varphi \in \Psi_t$. Since $\mathsf{P}[\Psi_t] \ge 1-\varepsilon$ it holds that $B^{\varepsilon}(t) = 0$ and $\mathcal{S}_{mp}^{\varepsilon}(\tau,t)$ recovers $\mathcal{S}_{br}^{\varepsilon}(\tau,t)$ exactly.

Fig.~\ref{fig:minimalprobing} quantifies the distribution of $B(t)$ that is observed by minimal probing at $t=400$. A network of $n = 1 \dots 4$ systems in series, each with random sleep scheduling as in Sec.~\ref{sec:systemmodelrandom}, is considered. The network service process is additive for $n=1$, but not for $n > 1$ as confirmed by Fig.~\ref{fig:superadditivity}. Since in general, it is not known whether the service process is additive or not, estimates of burst probing are not reliable. Using minimal probing, we can either if $B^{\varepsilon}(t) = 0$ confirm the estimate of burst probing or otherwise obtain a conservative estimate with a defined accuracy that is given by $B^{\varepsilon}(t)$. In Fig.~\ref{fig:minimalprobing} we observe that $B^{\varepsilon}(t) > 0$ for $\varepsilon = 10^{-3}$ and $n > 1$, i.e., the estimate from burst probing is not confirmed. $B^{\varepsilon}(t)$ is, however, small, i.e, minimal probing is accurate. We omit showing further service curve estimates of minimal probing since, except for the additive offset $B^{\varepsilon}(t)$, they generally match the estimates of burst probing as determined by~\eqref{eq:estimateaccuracy}.
%
%------------------------------------------------------------------------
%
\section{Conclusions}
\label{sec:conclusion}
This work contributed a notion of non-stationary service curves to the stochastic network calculus. We analyzed systems with sleep scheduling and provided insights into the transient behavior. Noticeable transient overshoots and large relaxation times were observed that have a significant impact on network performance. Due to its generality, the service curve model can include a variety of further systems. Beyond modelling, we considered measurement-based methods for identification of systems. We discovered that existing probing methods cannot accurately estimate non-stationary service curves. The difficulties were related to the non-convexity and super-additivity of the service. We devised a novel two-phase method that first estimates a minimal probe that is used in a second step to obtain an accurate service curve estimate. Simulation results confirmed the fidelity of the approach.
%
%------------------------------------------------------------------------
%
\bibliographystyle{IEEEtran}
\bibliography{IEEEabrv,IEEEfidler}
\end{document}